\documentclass[letterpaper,11pt]{article}

\usepackage[Arxiv]{optional}
    
\usepackage{verbatim}

\usepackage[margin=1in]{geometry}
\usepackage{setspace}
\usepackage{graphicx}
\usepackage{epsfig}

\usepackage{microtype} 
\usepackage{array} 

\usepackage{amssymb}
\usepackage{amsmath}
\usepackage{amsthm}
\usepackage{mathtools} 
\usepackage{stmaryrd} 
\usepackage{stackrel} 
\usepackage{latexsym} 

\usepackage{dutchcal} 

\usepackage{tikz}
\usetikzlibrary{arrows}
\usetikzlibrary{arrows.meta}
\usetikzlibrary{decorations.pathreplacing}
\usetikzlibrary{cd}
\usepackage[all]{xy} 

\usepackage{algorithm}
\usepackage[noend]{algpseudocode}
\algnewcommand{\LineComment}[1]{\Statex\hspace{\algorithmicindent}\(\triangleright\) #1}
\algnewcommand\algorithmicforeach{\textbf{for each}}
\algdef{S}[FOR]{ForEach}[1]{\algorithmicforeach\ #1\ \algorithmicdo}
\usepackage{etoolbox} 

\usepackage{caption}
\usepackage{subcaption}

\usepackage{makecell} 
\usepackage{booktabs} 
\usepackage{multirow} 
\usepackage{tablefootnote} 

\usepackage[hidelinks]{hyperref} 
\usepackage[symbol]{footmisc}
\usepackage{makeidx} 
\usepackage{url} 
\usepackage{color}
\usepackage{xcolor}
\usepackage{xspace} 
\usepackage[normalem]{ulem} 
\usepackage{soul} 
\usepackage{ifpdf} 
\usepackage{extarrows}
\usepackage{stackengine}    
\usepackage{lipsum}
\usepackage{bm}
\usepackage{cleveref} 

\usepackage[style=numeric-comp, sorting=none]{biblatex}
\addbibresource{refs.bib}

\usepackage{enumitem}
\setlist[itemize]{leftmargin=\parindent}
\setlist[enumerate]{leftmargin=\parindent}
\setlist[description]{font=\bfseries,leftmargin=\parindent}

\makeatletter
\expandafter\patchcmd\csname\string\algorithmic\endcsname{\itemsep\z@}{\itemsep=0.25ex}{}{}
\makeatother



\theoremstyle{plain}
    \newtheorem{theorem}{Theorem}
    \newtheorem{proposition}[theorem]{Proposition}

    \newtheorem*{algr*}{Algorithm}
\theoremstyle{definition}
    \newtheorem{definition}{Definition}

    \newtheorem{remark}{Remark}
    \newtheorem*{remark*}{Remark}
    \newtheorem*{example*}{Example}

\newcommand{\revise}[1] {{{{#1}}}}

\title{A Fast Algorithm for Computing Zigzag Representatives}

\author{Tamal K. Dey\thanks{Department of Computer Science, Purdue University, West Lafayette, IN, USA. \texttt{tamaldey@purdue.edu}}
\and Tao Hou\thanks{Department of Computer Science, University of Oregon, Eugene, OR, USA. \texttt{taohou@uoregon.edu}}
\and  Dmitriy Morozov\thanks{Lawrence Berkeley National Laboratory, Berkeley, CA, USA. \texttt{dmitriy@mrzv.org}}
}

\date{}


\newcommand{\Hm}{\mathsf{H}}

\newcommand{\Chn}{\mathsf{C}}
\newcommand{\Zyc}{\mathsf{Z}}
\newcommand{\Bnd}{\mathsf{B}}




\renewcommand{\ker}{\mathsf{ker}}

\newcommand{\Pers}{\mathsf{Pers}}
\newcommand{\PersH}{\mathsf{Pers}^H}
\newcommand{\PersB}{\mathsf{Pers}^B}

\newcommand{\pinds}{\mathsf{P}}
\newcommand{\pindsH}{\mathsf{P}^H}
\newcommand{\pindsB}{\mathsf{P}^B}
\newcommand{\ninds}{\mathsf{N}}
\newcommand{\nindsH}{\mathsf{N}^H}
\newcommand{\nindsB}{\mathsf{N}^B}

\renewcommand{\bar}[1]{\overline{#1}}

\newcommand{\lbarrowspace}{\;}

\let\leftrightarrowsp\lrarrowsp

\newcommand{\incto}{\hookrightarrow}
\newcommand{\inctosp}[1]{\xhookrightarrow{\lbarrowspace#1\lbarrowspace}}
\newcommand{\bakincto}{\hookleftarrow}
\newcommand{\bakinctosp}[1]{\xhookleftarrow{\lbarrowspace#1\lbarrowspace}}

\newcommand{\given}{\,|\,}
\newcommand{\Set}[1]{\{#1\}}
\newcommand{\bigSet}[1]{\big\{#1\big\}}

\newcommand{\repsum}{\boxplus}

\newcommand{\cancelText}[1]{}



\let\emptyset\varnothing
\let\intersect\cap

\let\union\cup



\newcommand{\Fcal}{\mathcal{F}}

\newcommand{\Ical}{\mathcal{I}}


\newcommand{\Zbb}{\mathbb{Z}}




\newcommand{\aG}{\alpha}

\newcommand{\iG}{\iota}

\newcommand{\lG}{\lambda}
\newcommand{\LG}{\Lambda}

\newcommand{\oG}{\omega}

\newcommand{\sG}{\sigma}


\newcommand{\blue}{}


\newcommand{\birth}{b}
\newcommand{\bbirth}{b'}
\newcommand{\hbirth}{\hat{b}}
\newcommand{\death}{d}
\newcommand{\filtcnt}{m}
\newcommand{\simpcnt}{n}

\newcommand{\filt}{\Fcal}
\newcommand{\Filt}[1]{\Fcal_{#1}}
\newcommand{\filtlen}{\ell}
\newcommand{\cyc}{z}
\newcommand{\linmap}{\psi^*}
\newcommand{\linmapB}{\psi^{\#}}
\newcommand{\fsimp}{\sigma}

\newcommand{\wire}[1]{\oG_{#1}}
\newcommand{\bndl}{W}
\newcommand{\bndlsum}{\boxplus}

\newcommand{\bndbndl}{U}

\newcommand{\Ktot}{\bar{K}}

\newcommand{\indmap}{\chi}

\newcommand{\cycmat}{Z}
\newcommand{\bndmat}{B}
\newcommand{\chnmat}{C}

\newcommand{\pivot}{\mathrm{pivot}}
\newcommand{\matcol}[2]{#1[{#2}]}
\newcommand{\lessB}{\prec}
\newcommand{\cmatcol}[1]{\cycmat[{#1}]}

\newcommand{\defemph}[1]{\textbf{\textit{#1}}}

\newcommand{\Int}{\mathbb{I}}
\newcommand{\M}{\mathsf{M}}
\newcommand{\EM}{\overline{\mathsf{M}}}
\newcommand{\Zs}{\mathsf{Z}}
\newcommand{\Bs}{\mathsf{B}}
\newcommand{\Cs}{\mathsf{C}}
\newcommand{\cvec}{{\bf vec}_{\mathbb F}}
\newcommand{\rep}{\mathsf{rep}}


%


\makeatletter
\newcommand*{\da@rightarrow}{\mathchar"0\hexnumber@\symAMSa 4B }
\newcommand*{\da@leftarrow}{\mathchar"0\hexnumber@\symAMSa 4C }
\newcommand*{\xdashrightarrow}[2][]{%
  \mathrel{%
    \mathpalette{\da@xarrow{#1}{#2}{}\da@rightarrow{\;}{}}{}%
  }%
}
\newcommand{\xdashleftarrow}[2][]{%
  \mathrel{%
    \mathpalette{\da@xarrow{#1}{#2}\da@leftarrow{}{}{\;}}{}%
  }%
}
\newcommand{\xdashleftrightarrow}[2][]{%
  \mathrel{%
    \mathpalette{\da@xarrow{#1}{#2}\da@leftarrow\da@rightarrow{}{}}{}%
  }%
}
\newcommand*{\da@xarrow}[7]{%
  \sbox0{$\ifx#7\scriptstyle\scriptscriptstyle\else\scriptstyle\fi#5#1#6\m@th$}%
  \sbox2{$\ifx#7\scriptstyle\scriptscriptstyle\else\scriptstyle\fi#5#2#6\m@th$}%
  \sbox4{$#7\dabar@\m@th$}%
  \dimen@=\wd0 %
  \ifdim\wd2 >\dimen@
    \dimen@=\wd2 %
  \fi
  \count@=2 %
  \def\da@bars{\dabar@\dabar@}%
  \@whiledim\count@\wd4<\dimen@\do{%
    \advance\count@\@ne
    \expandafter\def\expandafter\da@bars\expandafter{%
      \da@bars
      \dabar@ 
    }%
  }%
  \mathrel{#3}%
  \mathrel{%
    \mathop{\da@bars}\limits
    \ifx\\#1\\%
    \else
      _{\copy0}%
    \fi
    \ifx\\#2\\%
    \else
      ^{\copy2}%
    \fi
  }%
  \mathrel{#4}%
  \!\!
}
\makeatother

\newcounter{desccounter}



\newcommand{\cancel}[1]

\hyphenation{mini-mal}
\hyphenation{Further-more}

\let\defemph\emph

\begin{document}

\maketitle

\opt{Arxiv}{

\begin{abstract}
Zigzag filtrations of simplicial complexes generalize the usual filtrations 
by allowing simplex deletions in addition to simplex insertions.
The barcodes computed from zigzag filtrations encode the evolution of homological features.
Although one can locate a particular feature at any index in the filtration using existing algorithms,
the resulting \emph{representatives} may not be compatible with the zigzag: a representative cycle at one index may not map into a representative cycle at its neighbor.
For this, one needs to compute compatible representative cycles along each bar in the barcode. 
\revise{It is known that the barcode for a zigzag filtration with $m$ insertions and deletions can be computed in $O(m^\omega)$ time,
where $\omega< 2.373$ is the matrix multiplication exponent.
However,} it is not known how to compute the compatible representatives so efficiently. 
For a non-zigzag filtration,  the classical matrix-based algorithm provides representatives in $O(m^3)$ time, which can be improved to $O(m^\omega)$.
However, no known algorithm for zigzag filtrations  computes the representatives with the $O(m^3)$
time bound. 
We present an $O(m^2n)$ time algorithm for this problem, where $n\leq m$ is the size of the largest complex in the filtration.
\end{abstract}

}

\section{Introduction}
\label{sec:intro}

Persistent homology and its computation have been a central theme in topological data analysis (TDA)~\cite{DW22,edelsbrunner2010computational,Oudot15}. 
Using persistent homology,
one computes a signature called a \emph{barcode} from
data which is presented in the form of a growing sequence of simplicial
complexes called a \emph{filtration}. However, the barcode itself does not provide an avenue to go back to the data. For that, we need to compute
a representative for each \emph{bar} (interval) in the barcode, that is, a cycle 
whose homology class exists exactly over the duration of the bar. In other words,
we aim to compute the \emph{interval modules} themselves in the interval decomposition~\cite{Gabriel72} instead of only the intervals. 

In this paper, we consider computing representatives for the bars where the given
filtration is no longer monotonically growing but may also shrink, resulting in
what is known as a \emph{zigzag} filtration.
A number of algorithms have
been proposed for computing the barcode from a zigzag filtration~\cite{carlsson2009zigzag-realvalue,DBLP:conf/esa/DeyH22,DW22,maria2014zigzag,MS19,milosavljevic2011zigzag}.
All of them maintain \emph{pointwise representatives}, i.e., a homology
basis for every step in the filtration, but they do not compute the
\emph{barcode representatives}, i.e., a set of compatible pointwise bases, where
elements of one basis are matched to the elements of its neighbors (see Definition~\ref{dfn:rep-cls}).
Solving this problem is the main topic of this paper.

The barcode representatives are not readily available during the zigzag
computation because basis updates at any point may require
changes both in the future and in the past to maintain the matching.
To make this precise, let $m$ be the number of additions and deletions 
and $n$ be the maximum size of complexes in
a zigzag filtration.
The challenge is rooted
in the fact that a 
barcode representative for a zigzag filtration (henceforth also called a \emph{zigzag representative})
may consist of $O(m)$ different cycles~\cite{maria2014zigzag}
for each of the $O(m)$ indices
in a bar (see Definition~\ref{dfn:rep-cls}).
Consequently, the space complexity for the straightforward way of
maintaining a zigzag representative is $O(mn)$.
This is
in contrast to a non-zigzag representative which
consists of the same cycle over the entire bar.
One obvious way to obtain the zigzag representatives
is to adapt the $O(mn^2)$ algorithm proposed by Maria and Oudot~\cite{maria2014zigzag}
which directly targets representatives.
But then, the complexity increases to $O(m^2n^2)$,
which stems from the
need of summing two representatives each consisting of $O(m)$ cycles. In total
these summations over the entire course of the algorithm incur an $O(m^2n^2)$ cost.
{\blue To see this,
notice that the algorithm in~\cite{maria2014zigzag}
is based on summations of bars (and their representatives)
where each bar is associated with a single cycle from the $O(m)$ cycles in its representative.
The algorithm 
performs $O(mn)$ summations 
of bars and the associated cycles resulting in an $O(mn^2)$ complexity.
To adapt this algorithm for computing representatives,
one instead maintains the full representative consisting of $O(m)$ cycles
for each bar.
Because a summation of two bars now costs $O(mn)$ time,
the $O(mn)$ bar summations in the algorithm~\cite{maria2014zigzag}
then result in an $O(m^2n^2)$ complexity.}

It has remained tantalizingly difficult to design an algorithm that brings down the theoretical complexity to $O(m^3)$, matching the complexity for
non-zigzag filtrations~\cite{cohen2006vines,milosavljevic2011zigzag}, 
while remaining practical.
As mentioned already, the bottleneck of the computation lies in 
the summation of two representatives
each consisting of $O(m)$ cycles.
In this paper, we present an $O(m^2n)$ algorithm which overcomes the bottleneck
by compressing the representatives into a more compact form each taking
only $O(m)$ space instead of $O(mn)$ space.

\paragraph{Figure~\ref{fig:repchange}: an illustrative example.}
The compression of representatives in our algorithm is made possible by 
adopting some novel constructs for computing zigzag persistence whose ideas are illustrated in Figure~\ref{fig:repchange} 
(see also the beginning of Section~\ref{sec:rep-as-bndl} for more explanations;
formal definitions of concepts mentioned below are provided in Section~\ref{sec:core-def}):

\begin{figure}[!tbp]
  \centering
  \opt{Arxiv}{\includegraphics[width=0.8\linewidth]{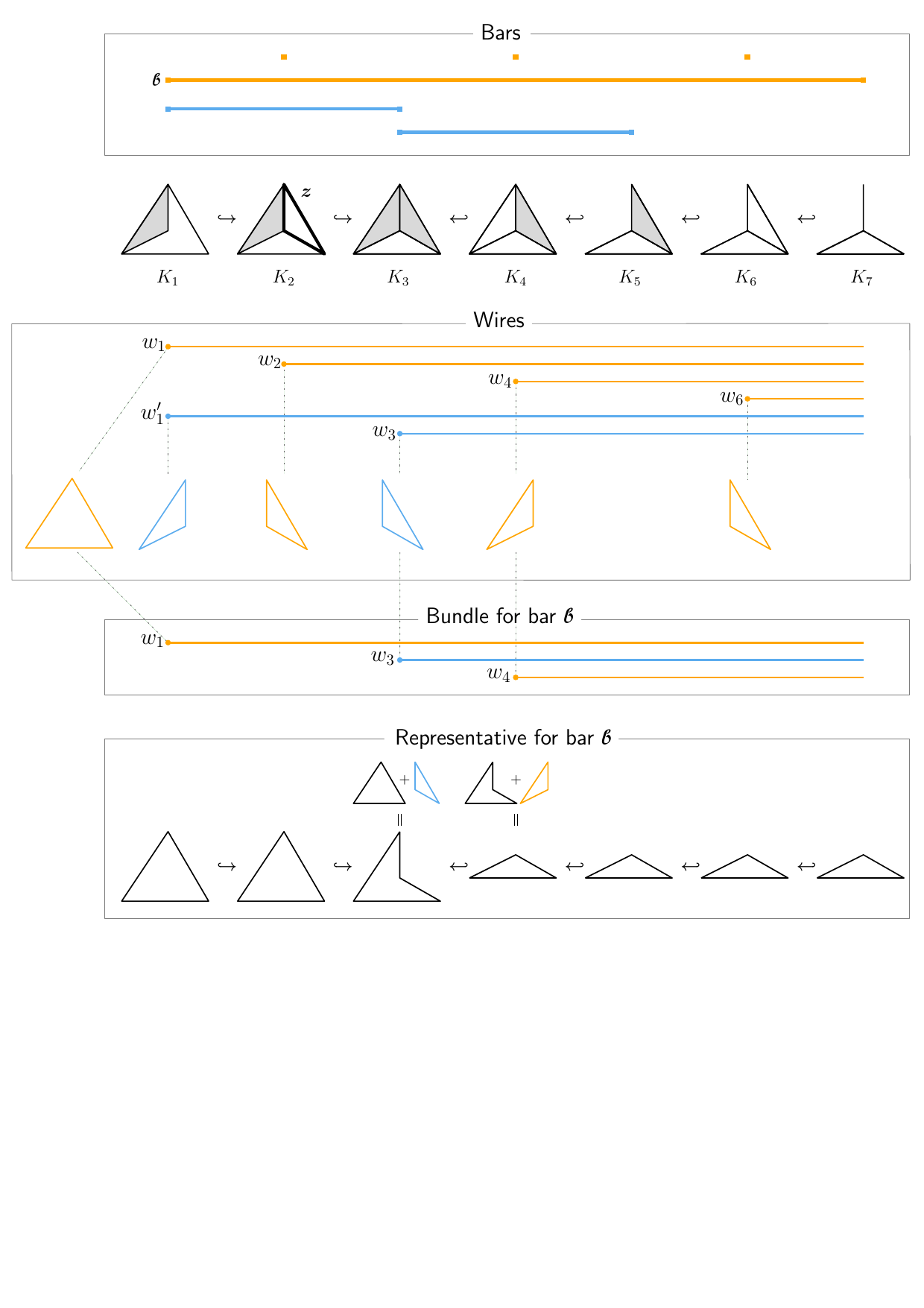}}
  \opt{SN}{\includegraphics[width=\linewidth]{fig/repchange}\bigskip}

  \medskip
  \caption{An example of wires, bundles, and the boundary zigzag module
  which are major constructs leading to the $O(\filtcnt^2\simpcnt)$ algorithm.
  Orange and blue colors are used for the constructs of
  homology and boundary zigzag modules respectively.}
  \label{fig:repchange}
\end{figure}

\begin{itemize}
    \item First, we observe that the barcode of the regular (homology) zigzag module interconnects with
the barcode of another module, namely, the \emph{boundary} zigzag module,
which arises out of the boundary groups for complexes in the input zigzag filtration.
To see the interconnection,
let $\bm{z}$ denote 
the bold cycle in $K_2$ (and its continuation
in the complexes $K_3$$-$$K_5$) in Figure~\ref{fig:repchange}.
In Figure~\ref{fig:repchange} (top), the bar $[2,2]$ 
for the homology module
born at $K_2$ and dying entering $K_3$ (hence drawn as an orange dot at index $2$) interconnects with the bar $[3,5]$ 
for the boundary module
born at $K_3$ and dying entering $K_6$ as the cycle $\bm{z}$
representing the bar $[2,2]$ becomes a boundary in $K_3$. The 
bar $[3,5]$,
which is also represented by $\bm{z}$,
in turn interconnects with the bar $[6,6]$ 
for the homology module
as $\bm{z}$ becomes a non-boundary at $K_6$.
\item
Second, we observe that
the seamless transition between 
barcodes of the two modules
allows us to define a construct
called \emph{wires} each of which is a single cycle  
with a fixed birth index,
presumably extending indefinitely to infinity. 
A wire
may be a boundary cycle 
(thus called a \emph{boundary} wire) 
with its birth index coinciding 
with a birth 
in the boundary module, or a non-boundary cycle 
(thus called a \emph{non-boundary} wire)
with its birth index coinciding with a birth 
in the 
homology module. For the example in Figure~\ref{fig:repchange},
we have \revise{four non-boundary wires} (orange) and two boundary wires (blue)
subscripted by the birth indices
with respective cycles also being illustrated.

A collection
of such wires forms what we call a \emph{bundle} for a zigzag bar. 
In Figure~\ref{fig:repchange}, we show the bundle for the longest bar $\mathbcal{b}=[1,7]$. 
One surprising fact we find is that
representative cycles of a bar can be recovered from  
index-wise summations of the wire cycles
in its bundle even though a
wire cycle involved in the summation may not be present in each complex over the bar (see Section~\ref{sec:rep-as-bndl}).
Figure~\ref{fig:repchange} (bottom) shows the representative cycles of the bar $\mathbcal{b}$ obtained by summing three wires $\{w_1,w_3, w_4\}$ even though the cycles for wires $w_1,w_4$ are not present in $K_5$$-$$K_7$.


At each index in the filtration, there can be no more than one wire with birth at that index. Hence,
each bundle is represented as a set of $O(m)$ wire indices
in our algorithm.
The summations among the bundles are then less costly
and can be done in $O(m)$ time
because each entails doing a symmetric sum among $O(m)$ wire indices 
rather than the actual 
$O(m)$ cycles.
When a bar is completed, its actual representative is read from summing
the cycles in its bundle. 
Wires and bundles allow our algorithm to have a space complexity of $O(mn)$ whereas the algorithm for computing representatives
adapted from~\cite{maria2014zigzag} has a space complexity of $O(mn^2)$.
\revise{Notice that the space complexity $O(mn)$ indicates
the working memory of the algorithm, while the output size
is $O(m^2)$ for $O(m)$ bundles.}
\end{itemize}

Our compression using wires is also made possible
by adopting
a new way
of computing zigzag barcodes, which processes the filtration 
from left to right similar to the algorithm in~\cite{carlsson2009zigzag-realvalue}
but directly targets maintaining the zigzag representatives
over the course of the computation. 
This is also in contrast to the other representative-based algorithm~\cite{maria2014zigzag} 
which
always maintains a reversed non-zigzag filtration
at the end.
Section~\ref{sec:rep-as-bndl} briefly describes the idea.

\cancelText{
\section{Tamal's section}

A zigzag poset $(P,\leq_P)$ is a finite poset whose Hasse diagram is a sequence as in Eq.~\eqref{eq:zigzagposet} where $p_i\leftrightarrow p_{i+1}$ means either $p_i\leq_P p_{i+1}$
(written as $p_i\rightarrow p_{i+1}$) or $p_{i+1}\leq_P p_i$ (written as
$p_i \leftarrow p_{i+1}$). 
\begin{equation}
P: p_0\leftrightarrowsp{}p_1\leftrightarrowsp{} \cdots \leftrightarrowsp{}p_t
\label{eq:zigzagposet}
\end{equation}
Consider $P$ as a category whose objects are
points in $P$ and morphisms are the relations given by $\leq_P$. A zigzag persistence module $\M$ indexed by $P$
is a functor $\M: P\rightarrow \cvec$ where $\cvec$ is the category of
finite dimensional vector spaces over a fixed field $\mathbb F$. Denoting $\M(p_i)=V_i$ and the
morphism $\M(p_i \leftrightarrow p_{i+1})$ as $\psi_i$ we can write
$\M$ as
\begin{eqnarray}
\M: V_0\leftrightarrowsp{\psi_0}V_1\leftrightarrowsp{\psi_1} \cdots \leftrightarrowsp{\psi_{t-1}}V_t
\label{eq:persmod}
\end{eqnarray}
\begin{definition}[Representative]
Given a persistence module
$\M$ as in Eq.~\eqref{eq:persmod} and an interval $[b,d]\subseteq [0,m]$,
a \emph{representative} for the pair $(\M,[b,d])$
is a sequence of vectors $v_b\in V_b,\ldots, v_m\in V_m$ so that
$\M(p_i\rightarrow p_j)(v_i)=v_j$ or $\M(p_j\rightarrow p_i)(v_j)=v_i$ if
$i,j\in [b,d]$ and $0$ otherwise.
\end{definition}

\begin{definition}[Interval module]
For an interval $[b,d]\subseteq [0,m]$,
an \emph{Interval} module $\Int:=\Int^{[b,d]}: P\rightarrow \cvec$ 
is a persistence module where $\Int(p_i)=\mathrm{span}(v_i)$ for every $p_i\in P$ and the sequence $v_b,\ldots,v_d$ is a representative of the pair
$(\Int,[b,d])$.
\end{definition}
It is known that a zigzag module $\M$ as defined has a unique decomposition
up to permutation and isomorphism into interval modules, that is,
$\M=\Int_1\oplus\cdots\oplus \Int_r$ where each $\Int_i=\Int^{[b_i,d_i]}$
is an interval module. We call it a \emph{complete} decomposition of $\M$.

\begin{definition}[Extended module]
Let $\M=\Int_1\oplus\cdots\oplus \Int_r$ be a complete decomposition where
$v_{b_i},\ldots, v_{d_i}$ is a representative for
$(\Int_i, [b_i,d_i])$. An extended module $\EM$ of $\M$ is given by 
the complete decomposition
$\EM=\bar{\Int}_1\oplus\cdots\oplus \bar{\Int}_r$ where for
each $i\in \{1,\ldots, r\}$,
the sequence $\bar{v}_{b_i},\ldots,\bar{v}_m$ with $\bar{v}_j=v_j$
when $j\in [b_i,d_i]$ and $\bar{v}_j=v_{d_i}$ when $j\in [d_i,m]$ is a representative for $(\bar{\Int}_i, [b_i,m])$.
\label{def:extend}
\end{definition}
In our case, the zigzag module $\M$ is induced by applying the
homology functor in a certain
degree $k$ (over the fixed field $\mathbb{F}$) to a simplicial zigzag
filtration:
\begin{equation*}
\Fcal: \emptyset=K_0 \leftrightarrowsp{\fsimp_0} K_1 \leftrightarrowsp{\fsimp_1} 
\cdots \leftrightarrowsp{\fsimp_{\filtcnt-1}} K_\filtcnt.
\end{equation*}
It means $\M(p_i)=H_k(K_i)$ and the internal morphism $\M(p_i\leftrightarrow p_{i+1})$ is the linear map between $H_k(K_i)$ and $H_k(K_{i+1})$ induced by
the inclusion $K_i\leftrightarrow K_{i+1}$. Suppose that 
$\M=\bigoplus_{i=1}^r\Int_i$ is a complete
decomposition where $\Int_i=\Int^{[b_i,d_i]}$. We are to compute
the representatives of $(\Int_i, [b_i,d_i])$ for every $i\in [r]$.
We consider two other modules, namely the cycle module
$\Zs: P\rightarrow \cvec$ and the boundary module
$\Bs: P\rightarrow \cvec$ as follows. 
\begin{definition}
For every $i\in \{0,\ldots, m\}$,
let $Z_i^k$ and $B_i^k$ denote the
$k$-cycle group and $k$-boundary group of $K_i$. They are also $\mathbb{F}$-vector spaces.
The module $\Zs$ is given by $\Zs(p_i)=Z_i^k$ for every $p_i\in P$ and
$\Zs(p_i\leftrightarrow p_{i+1})$ is induced by inclusion $Z_i^k\leftrightarrow Z_{i+1}^k$. Similarly, define $\Bs$ with the boundary groups $B_i^k$s.
\end{definition}
Let $\Zs=\bigoplus_i\Int_i^{\sf z}$ and $\Bs=\bigoplus_i\Int_i^{\sf b}$ be the
complete decompositions of the modules $\Zs$ and $\Bs$ respectively. Suppose that $\Int_i^{\sf z}$ has support on the interval $[b_i^{\sf z},d_i^{\sf z}]$.
Then, a representative for $(\Int_i^{\sf z},[b_i^{\sf z},d_i^{\sf z}])$ necessarily has the same vector, that is, the same cycle $z_i$ at each point 
$p_i\in [b_i^{\sf z},d_i^{\sf z}]$. 
Now extend the module $\Zs$ to $\bar{\Zs}$ using the definition~\ref{def:extend}. Each extended interval module $\bar{\Int}_i^{\sf z}$ of $\bar{\Zs}$ has support on the interval $[b_i^{\sf z}, m]$ and the corresponding representative is given by the same cycle $z_i$ at each point
$p_i\in [b_i^{\sf z}, m]$. 
Similarly,
each extended interval module $\bar{\Int}_i^{\sf b}$ of 
$\bar{\Bs}$ has support on
an interval $[b_i^{\sf b},m]$ and has the same cycle $c_i$ in its
representative.

Consider the direct sum $\Cs=\bar{\Zs}\oplus\bar{\Bs}$. The module $\Cs$ has the
complete decomposition $\Cs=(\bigoplus_i \Int_i^{\sf z})\bigoplus 
(\bigoplus_i\Int_i^{\sf b})$. The set of
interval modules $\{\Int_i^{\sf z}\}$ together with the representatives
$\{z_i\}$ constitute what we call \emph{cycle wire} and the interval
$\{\Int_i^{\sf b}\}$ together
with their representatives $\{c_i\}$ constitute what we call
\emph{boundary wires}. The set of cycle and boundary wires together
constitute what we call simply \emph{wires}.

Consider the set of all interval modules that are submodules of $\Cs$.
These interval modules necessarily have supports of the form $[b,m]$ for
some $b\in [0,m]$. They form a vector space over the field
$\mathbb F$ under the addition we define now.
\begin{definition}
    Let $\Int:=\Int^{[b,m]}$ and $\Int':=\Int'^{[b',m]}$ be two interval modules
    that are submodules of $\Cs$. Let $z_b,\ldots, z_m$ be the representative
    of $(\Int,[b,m])$ and $z'_{b'},\ldots,z'_m$ be the representative
    of $(\Int',[b',m])$. WLOG, assume that $b\leq b'$. Then, define
    an interval module $(\Int+\Int')$ with representative
    $z_{b'}+z'_{b'},\ldots,z_m+z'_m$ if $\Int'$ is a cycle wire and
    with representative $z_b,\ldots,z_{b'}+z'_{b'},\ldots, z_m+z'_m$ otherwise.
    \label{def:addition}
\end{definition}

Let $\mathbb T$ denote the vector space of interval modules that are submodules of $\Cs$ under the addition in Definition~\ref{def:addition}. Let $\M$ be the zigzag module induced by the zigzag filtration $\Fcal$ and $\EM$ be its extended module. Let $\mathbb S$ be the space of interval modules that are
submodules of $\EM$ under the addition in Definition~\ref{def:addition}. One of our main observation is:
\begin{proposition}
$\mathbb S$ is a subspace of $\mathbb T$.
\end{proposition}
}

\section{Core definitions}\label{sec:core-def}

Throughout,
we assume a \emph{simplex-wise zigzag filtration} $\Fcal$ as input to our algorithm:
\begin{equation}
\label{eqn:prelim-filt}
\Fcal: \emptyset=K_0 \leftrightarrowsp{\fsimp_0} K_1 \leftrightarrowsp{\fsimp_1} 
\cdots \leftrightarrowsp{\fsimp_{\filtcnt-1}} K_\filtcnt,
\end{equation}
in which each $K_i$ is a simplicial complex and each arrow
$K_i \leftrightarrowsp{\fsimp_i} K_{i+1}$ is either a forward inclusion
$K_i \inctosp{\fsimp_i} K_{i+1}$ (an addition of a simplex $\fsimp_i$)
or a backward one $K_i \bakinctosp{\fsimp_i} K_{i+1}$ (a deletion of a simplex $\fsimp_i$).
Notice that assuming $\Fcal$ to be simplex-wise and $K_0=\emptyset$ 
is a standard practice in the computation
of non-zigzag persistence~\cite{edelsbrunner2000topological}
and its zigzag version~\cite{carlsson2009zigzag-realvalue,maria2014zigzag}.
Also notice that any zigzag filtration in general can be converted into a
simplex-wise version, and
the representatives computed for this simplex-wise version
can also be easily mapped to the ones for the original filtration.
We let $\Filt{i}$ denote the part of $\Fcal$ up to index $i$, that is, 
\begin{equation}
\label{eqn:partial-filt}
\Filt{i}: \emptyset=K_0 \leftrightarrowsp{\fsimp_0} K_1 \leftrightarrowsp{\fsimp_1} 
\cdots \leftrightarrowsp{\fsimp_{i-1}} K_i.
\end{equation}
Notice that $\Fcal=\Filt{\filtcnt}$.
The \emph{total complex} $\Ktot$ of $\Fcal$ 
is the union of all 
complexes in $\Fcal$.
Let $\simpcnt$ be the maximum size of  complexes in $\Fcal$
(note that generally $\simpcnt$ is not equal to the size of $\Ktot$).

{\blue In this paper, we consider homology under the field $\Zbb_2$
for the ease of presentation. The idea described can naturally be generalized to 
arbitrary field coefficients.}
For a complex $K_i$,
we consider its homology group $\Hm(K_i)$ 
{over all degrees},
which is the direct sum of $\Hm_p(K_i)$ for all $p$
({\blue so that the dimension of  $\Hm(K_i)$ equals the sum of the dimensions of all $\Hm_p(K_i)$'s}).
Accordingly,
$\Chn(K_i)$, $\Zyc(K_i)$, and $\Bnd(K_i)$ denote the chain, cycle,
and boundary groups of $K_i$ over all degrees respectively
\revise{(similar to $\Hm(K_i)$, dimensions of these spaces equal sums of the dimensions 
over all degrees $p$)}.
Since we take $\Zbb_2$ as coefficients,
chains or cycles in this paper are also treated
as sets of simplices.
We also consider any chain $c\in\Chn(K_i)$  to be
a chain in $\Ktot$ in general
and  do not differentiate the same simplex
appearing in different complexes in $\Fcal$. 
For example, 
suppose that all simplices in $c\in\Chn(K_i)$
also belong to a $K_j$, 
we then have $c\in\Chn(K_j)$.

Taking the homology functor on $\Fcal_i$ we obtain the following (homology) \emph{zigzag module}:
\begin{equation*}
\Hm(\Fcal_i): 
\Hm(K_0) 
\leftrightarrowsp{\linmap_0} 
\Hm(K_1) 
\leftrightarrowsp{\linmap_1} 
\cdots 
\leftrightarrowsp{\linmap_{\filtcnt-1}} 
\Hm(K_i).
\end{equation*}

Similarly,  {\blue taking the boundary groups of complexes in $\Fcal_i$
and the chain maps between them, which indeed defines a \emph{boundary functor},}
we obtain the  (boundary) {zigzag module}:
\[\Bnd(\Fcal_i): 
\Bnd(K_0) 
\leftrightarrowsp{\linmapB_0} 
\Bnd(K_1) 
\leftrightarrowsp{\linmapB_1} 
\cdots 
\leftrightarrowsp{\linmapB_{\filtcnt-1}} 
\Bnd(K_i). \]
Each 
$\linmap_j:\Hm(K_j)\leftrightarrow \Hm(K_{j+1})$
in $\Hm(\Fcal_i)$
is a linear map induced by inclusion between homology groups whereas 
each 
$\linmapB_j:\Bnd(K_j)\leftrightarrow \Bnd(K_{j+1})$ 
in $\Bnd(\Fcal_i)$ is
an inclusion between chain  groups.
By~\cite{carlsson2010zigzag,Gabriel72}, for some index sets $\LG_H$ and
$\LG_B$,
$\Hm(\Fcal_i)$ and $\Bnd(\Fcal_i)$
have decompositions of the form
\[
\Hm(\Fcal_i)=\bigoplus_{k\in\LG_H}\Ical^{[\birth_k,\death_k]}\quad\text{ and }\quad
\Bnd(\Fcal_i)=\bigoplus_{k\in\LG_B}\Ical^{[\birth_k,\death_k]},\]
in which each $\Ical^{[\birth_k,\death_k]}$
is an
{\it interval module} over the interval $[\birth_k,\death_k]\subseteq\Set{0,1,\ldots,i}$.
The 
set of intervals
$\Pers^H(\Fcal_i):=\Set{[\birth_k,\death_k]\given k\in\LG_H}$ for $\Hm(\Fcal_i)$
and the set of intervals $\Pers^B(\Fcal_i):=\Set{[\birth_k,\death_k]\given k\in\LG_B}$ for $\Bnd(\Fcal_i)$ 
are called the {\it  homology barcode} 
and {\it boundary barcode} of $\Fcal_i$ respectively.
In this paper, we introduce the computation 
of the intervals and representatives for $\Bnd(\Fcal)$
 as an integral part of the computation of those for $\Hm(\Fcal)$,
which is critical to achieving the $O(\filtcnt^2\simpcnt)$ complexity.
We similarly define a barcode $\Pers_p^H(\Fcal_i)$ for the module $\Hm_p(\Fcal_i)$ over each degree $p$,
so that $\Pers^H(\Fcal_i)=\bigsqcup_p \Pers^H_p(\Fcal_i)$.
Notice that we can also define the barcode  $\Pers_p^B(\Fcal_i)$ 
where $\Pers^B(\Fcal_i)=\bigsqcup_p \Pers^B_p(\Fcal_i)$.

\begin{definition}[Homology birth/death indices]
\label{dfn:pos-neg-inds}
Since $\Filt{i}$ is simplex-wise,
each map $\linmap_j$
in $\Hm(\Filt{i})$
is either injective with a 1-dimensional cokernel
or  surjective with a 1-dimensional kernel
but cannot be both.
The set of \defemph{homology birth indices} of $\Filt{i}$,
denoted $\pinds^H(\Filt{i})$,
and the {\blue(multi-)}set of \defemph{homology death indices} of $\Filt{i}$,
denoted $\ninds^H(\Filt{i})$,
are  constructively defined as follows:
for each forward 
$\linmap_j:\Hm(K_j)\to \Hm(K_{j+1})$,
we {\blue either include} 
$j+1$ to $\pinds^H(\Filt{i})$
if $\linmap_j$
is injective or
include 
$j$ to $\ninds^H(\Filt{i})$ otherwise.
Also,
for each backward  
$\linmap_j:\Hm(K_j)\leftarrow \Hm(K_{j+1})$,
we {\blue either include}
$j+1$ to $\pinds^H(\Filt{i})$
if $\linmap_j$
is surjective or include
$j$ to $\ninds^H(\Filt{i})$ otherwise.
Finally, we {\blue include} $r$ copies of 
$i$ to $\ninds^H(\Filt{i})$
where $r$ is the dimension of $\Hm(K_i)$.
\end{definition}

\begin{definition}[Boundary birth/death indices]
Similarly as above, we define the \defemph{boundary birth indices} $\pinds^B(\Filt{i})$
and \defemph{boundary death indices} $\ninds^B(\Filt{i})$ of $\Filt{i}$ by considering
the 
module $\Bnd(\Filt{i})$.
Notice that $\linmapB_j$ is always injective.
So, for each forward 
$\linmapB_j:\Bnd(K_j)\to \Bnd(K_{j+1})$
that is not surjective,
we include $j+1$ to $\pindsB(\Filt{i})$.
Also,
for each backward  
$\linmapB_j:\Bnd(K_j)\leftarrow \Bnd(K_{j+1})$
that is not surjective,
we include $j$ to $\nindsB(\Filt{i})$.
Finally, we include $q$ copies of $i$ to $\ninds^B(\Filt{i})$
where $q$ is the dimension of $\Bnd(K_i)$.
\end{definition}
 Whenever $\linmap_j$ is injective, $\linmapB_j$ is an identity map;
whenever  $\linmap_j$ is surjective, $\linmapB_j$ is not surjective.
Hence, $\pindsH(\Filt{i})\intersect\pindsB(\Filt{i})=\emptyset$
while (different copies of) $i$ could belong to both $\nindsH(\Filt{i})$ and $\nindsB(\Filt{i})$.
Also notice that $[b,d]\in\PersH(\Filt{i})$
implies that $b\in\pindsH(\Filt{i})$ and $d\in\nindsH(\Filt{i})$
(similar facts hold for $[b,d]\in\PersB(\Filt{i})$).
We provide the definition of  
homology representatives 
(see
Maria and Oudot~\cite{maria2014zigzag}) as follows and then
adapt it to define boundary representatives:

\begin{definition}[Homology representatives]
\label{dfn:rep-cls}
Consider a filtration $\Filt{i}$
and 
let $[b,d]\subseteq[0,i]$ be an interval 
where $b\in\pinds^H(\Filt{i})$ 
(notice that $b> 0$ because $K_0=\emptyset$ by assumption)
and $d\in\ninds^H(\Filt{i})$.
A sequence of cycles
$\rep=\Set{\cyc_\aG\in \Zyc(K_\aG) \given \aG\in[b,d]}$ 
is called a 
\defemph{homology representative}
(or simply \defemph{representative})
for 
$[b,d]$ if for every $b\leq \aG<d$, 
either $\linmap_\aG([z_\aG])=[z_{\aG+1}]$ or
$\linmap_\aG([z_{\aG+1}])=[z_{\aG}]$ based on the direction of $\linmap_\aG$.
Furthermore, we have:
\begin{description}
    \item[Birth condition:] 
    If $\linmap_{b-1}:\Hm(K_{b-1})\to\Hm(K_{b})$ is forward
    (thus being injective),
    $\cyc_b\in\Zyc(K_b)\setminus \Zyc(K_{b-1})$;
    if $\linmap_{b-1}:\Hm(K_{b-1})\leftarrow\Hm(K_{b})$ is backward
    (thus being  surjective),
    then $[\cyc_b]$ is the non-zero element in $\ker(\linmap_{b-1})$.
    
    \item[Death condition:]
    If $d<i$ and
    $\linmap_{d}:\Hm(K_{d})\leftarrow\Hm(K_{d+1})$ is backward
    (thus being injective),
    $\cyc_d\in\Zyc(K_{d})\setminus \Zyc(K_{d+1})$;
    if $d<i$
    and $\linmap_{d}:\Hm(K_{d})\to\Hm(K_{d+1})$ is forward
    (thus being surjective),
    then $[\cyc_d]$ is the non-zero element in $\ker(\linmap_{d})$.
\end{description}
\end{definition}
\begin{remark}
By definition, all $z_\aG$'s in a homology representative $\rep$
are  $p$-cycles for the same $p$, 
so we can also call $\rep$  a $p$-th homology representative.
\end{remark}


\begin{definition}[Boundary representatives]
\label{dfn:brep-cls}
Let $[b,d]\subseteq[0,i]$ be an interval 
where $b\in\pinds^B(\Fcal_i)$ and $d\in\ninds^B(\Fcal_i)$.
A sequence of cycles
$\rep=\Set{\cyc_\aG\in \Bnd(K_\aG) \given \aG\in[b,d]}$ 
is called a 
\defemph{boundary representative}
(or simply \defemph{representative})
for the interval $[b,d]$ if for every $b\leq \aG<d$, 
either $z_{\aG+1}=\linmapB_\aG(z_\aG)\stackrel{\text{def}}{=}z_\aG$
or
$z_\aG=\linmapB_\aG(z_{\aG+1})\stackrel{\text{def}}{=}z_{\aG+1}$
based on the direction of $\linmapB_\aG$.
Furthermore, we have:
\begin{description}\label{dfn:brep}
    \item[Birth condition:] 
    The cycle $\cyc_b$ satisfies that $\cyc_b\in\Bnd(K_{b})\setminus\Bnd(K_{b-1})$
    where $\linmapB_{b-1}:\Bnd(K_{b-1})\to\Bnd(K_{b})$ 
    is forward
    because $b\in\pindsB(\Filt{i})$.
    \item[Death condition:]
    If $d<i$,
    then $\cyc_d$ satisfies that $\cyc_d \in \Bnd(K_d)\setminus \Bnd(K_{d+1})$
    where the map $\linmapB_{d}:\Bnd(K_{d})\leftarrow\Bnd(K_{d+1})$ 
    is backward 
    because $d\in\nindsB(\Filt{i})$.
\end{description}
\end{definition}

\begin{remark}
In the sequence $\rep$ in  Definitions~\ref{dfn:rep-cls} and~\ref{dfn:brep-cls},
we also call $z_\aG$  a cycle \emph{at} index $\aG$.
\end{remark}

The following Proposition 
is used later for proofs and algorithms.
\begin{proposition}
    Let $\cyc^B_1,\ldots,\cyc^B_k$ be the cycles at index $j$ in representatives for all intervals 
    of
    $\PersB(\Fcal_i)$ containing $j$. Similarly, let
    $\cyc_1^H,\ldots,\cyc_{k'}^H$ be the cycles at index $j$ in representatives for all intervals 
    of $\PersH(\Fcal_i)$ containing $j$.
    Then, 
    $\bigSet{[\cyc_1^H],\ldots,[\cyc_{k'}^H]}$
    is a basis of $\Hm(K_j)$,
    $\bigSet{\cyc^B_1,\ldots,\cyc^B_{k}}$
    is a basis of $\Bnd(K_j)$,
    and
    $\bigSet{\cyc_1^H,\ldots,\cyc_{k'}^H,\cyc^B_1,\ldots,\allowbreak\cyc^B_{k}}$
    is a basis of $\Zyc(K_j)$.
    \label{prop:cyclebasis}
\end{proposition}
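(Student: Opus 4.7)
The plan is to exploit the interval decompositions of the zigzag modules $\Hm(\Fcal_i)$ and $\Bnd(\Fcal_i)$ and evaluate them at the single index $j$. Writing $\Hm(\Fcal_i)=\bigoplus_{k\in\LG_H}\Ical^{[\birth_k,\death_k]}$, the value at $j$ is $\Hm(K_j)=\bigoplus_{k : j\in[\birth_k,\death_k]}\Ical^{[\birth_k,\death_k]}(j)$, a direct sum of $1$-dimensional subspaces indexed exactly by those intervals containing $j$. Thus $\dim\Hm(K_j)=k'$, and to prove the first claim it suffices to show that the classes $[\cyc_1^H],\ldots,[\cyc_{k'}^H]$ supplied by the representatives are the generators of these summands (up to choice of basis). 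This follows directly from Definition~\ref{dfn:rep-cls}: the birth condition pins $[\cyc^H_{\birth_k}]$ inside the cokernel (or kernel, for backward maps) that creates the $k$-th summand at its birth, and the cycle-to-cycle matching along the interval transports this generator consistently via the maps $\linmap_\aG$ down to index $j$. Hence $\{[\cyc_1^H],\ldots,[\cyc_{k'}^H]\}$ is the intended basis of $\Hm(K_j)$.

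The same argument applied verbatim to $\Bnd(\Fcal_i)$ establishes the second claim. Here the situation is slightly simpler because each $\linmapB_\aG$ is an inclusion of chain groups and the birth condition in Definition~\ref{dfn:brep-cls} places $\cyc^B_{\birth_k}$ in $\Bnd(K_{\birth_k})\setminus\Bnd(K_{\birth_k-1})$, so the image at index $j$ is again the generator of the corresponding $1$-dimensional summand of $\Bnd(\Fcal_i)(j)=\Bnd(K_j)$. Counting: $\dim\Bnd(K_j)=k$.

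For the joint basis of $\Zyc(K_j)$, I would use the short exact sequence $0\to\Bnd(K_j)\to\Zyc(K_j)\to\Hm(K_j)\to 0$ (which splits over $\Zbb_2$), yielding $\dim\Zyc(K_j)=k+k'$, matching the cardinality of the proposed set. Linear independence follows by a standard argument: suppose $\sum_{s}a_s\cyc^H_s+\sum_{t}b_t\cyc^B_t=0$ in $\Zyc(K_j)$. Passing to homology kills the boundary part, giving $\sum_s a_s[\cyc^H_s]=0$ in $\Hm(K_j)$, and by the first claim all $a_s=0$. The remaining relation $\sum_t b_t\cyc^B_t=0$ then forces all $b_t=0$ by the second claim. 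Thus the combined set is linearly independent, hence a basis.

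The main obstacle I anticipate is the careful matching between the abstractly chosen basis elements in the decomposition and the concrete cycles produced by a representative as defined in Definitions~\ref{dfn:rep-cls} and~\ref{dfn:brep-cls}; in particular, one must verify that a representative cycle at index $j$ (which was propagated from the birth index through a possibly zigzagging sequence of forward and backward maps) indeed realizes the generator of its interval summand at $j$. The cleanest way to handle this is to observe inductively along the interval that, because $\linmap_\aG$ (respectively $\linmapB_\aG$) sends the class (respectively the cycle) at one end to the class (respectively cycle) at the other end by definition, the output is nothing but the value of the generating section of the interval module $\Ical^{[\birth_k,\death_k]}$ at $j$, which is precisely the basis vector of that summand.
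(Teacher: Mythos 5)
Your proof is correct and follows essentially the same route as the paper: both treat the first two assertions as consequences of the interval decompositions of $\Hm(\Fcal_i)$ and $\Bnd(\Fcal_i)$ evaluated at $j$, and both obtain the third assertion by combining the two via the short exact sequence $0\to\Bnd(K_j)\to\Zyc(K_j)\to\Hm(K_j)\to 0$ with a dimension count. The only cosmetic difference is that you verify linear independence of the combined set (passing a relation to homology to kill the $a_s$, then invoking independence of the boundary cycles to kill the $b_t$), while the paper verifies spanning with unique coefficients; these are dual formulations of the same argument.
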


\begin{proof}
    First, 
    the fact that
    $\bigSet{[\cyc_1^H],\ldots,[\cyc_{k'}^H]}$
    is a basis of $\Hm(K_j)$ and
    $\bigSet{\cyc^B_1,\ldots,\cyc^B_{k}}$
    is a basis of $\Bnd(K_j)$
    follows from the definition of interval decomposition and representatives.
    Consider any
    cycle $\cyc$ in $\Zyc(K_j)$. Then, there exists a unique $\alpha_t\in \{0,1\}$
    {\blue for each
    $1\leq t\leq k'$} so that $[\cyc]=\sum_t\alpha_t[\cyc^H_t]$. Then,
    $[\cyc]+\sum_t\alpha_t[\cyc^H_t]=[\cyc+\sum_t\alpha_t\cyc^H_t]=0$.
    It follows that $(\cyc+\sum_t\alpha_t\cyc^H_t)\in \Bnd(K_j)$, which implies that there exists a unique $\beta_\ell$ 
    for each $1\leq \ell\leq k$ so that
    $\cyc+\sum_t\alpha_t\cyc^H_t=\sum_\ell \beta_\ell\cyc^B_\ell$.
    So,
    $\cyc= \sum_t\alpha_t\cyc^H_t + \sum_\ell \beta_\ell \cyc^B_\ell$ for
    unique {\blue$\alpha_t$'s}, $1\leq t\leq k'$ and $\beta_\ell$'s, $1\leq \ell \leq k$.
    It follows that the union of the cycles $\{\cyc^B_\ell\}$ and $\{\cyc^H_t\}$
    generate $\Zyc(K_j)$. Since $k+k'=\mathrm{dim}(\Bnd(K_j)) + \mathrm{dim}(\Hm(K_j))=\mathrm{dim}(\Zyc(K_j))$, they form a basis.
\end{proof}



We then define 
summations of representatives for
intervals ending at $i$. These summations
respect a total order `$\lessB$' on birth indices~\cite{maria2014zigzag}, that is, a representative for $[b,i]$ 
can be added to a representative for $[b',i]$ if and only if $b\lessB b'$ (see Figure~\ref{fig:rep-sum}).

\begin{figure}[!tbp]
  \centering
  \includegraphics[width=\linewidth]{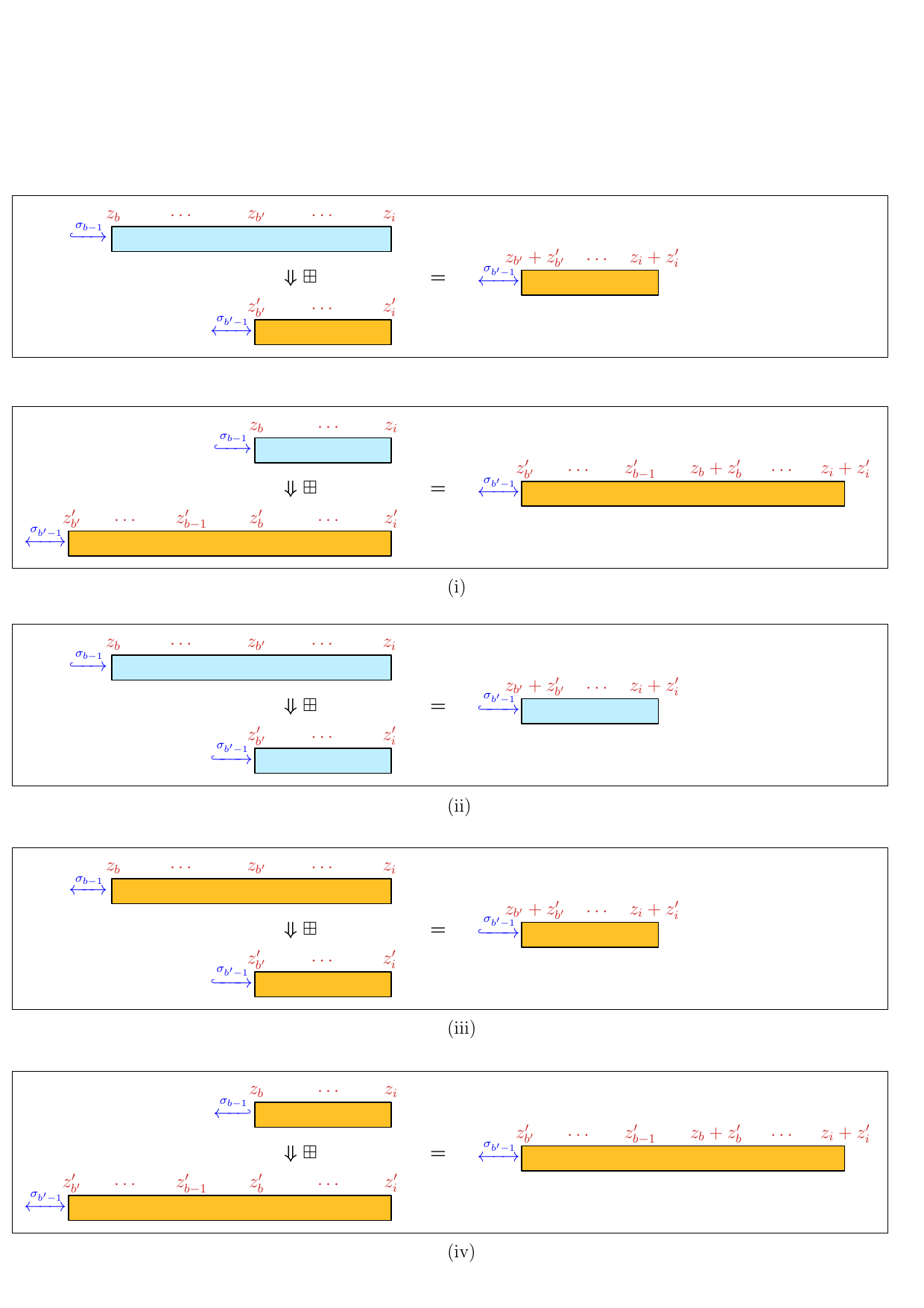}

  \opt{SN}{\bigskip}
  
  \caption{Illustration of how summations of representatives for
  intervals respect the
  order `$\lessB$' for the different cases in Definition~\ref{dfn:ind-order},
  with the double arrows indicating the directions of the summations.
  Boundary module intervals are shaded blue while homology module intervals are shaded orange.
  }
  \label{fig:rep-sum}
  \end{figure}

\begin{definition}[Total order on birth indices]
\label{dfn:ind-order}
For two birth indices $b,b'\in \pinds^H(\Fcal_i)\cup \pinds^B(\Fcal_i)$, 
we have $b\lessB b'$ if
one of the following holds: 
    \begin{enumerate}[label=(\roman*),leftmargin=2.2em]
        \item 
        $b\in \pinds^B(\Fcal_i)$ and $b'\in \pinds^H(\Fcal_i)$;
        \item 
        $b,b'\in \pinds^B(\Fcal_i)$ and $b<b'$;
       \item 
       $b,b'\in \pinds^H(\Fcal_i)$,  $b<b'$,  and $K_{b'-1}\inctosp{}K_{b'}$ is a forward inclusion;
       \item 
       $b,b'\in \pinds^H(\Fcal_i)$, $b'< b$,  and $K_{b-1}\bakinctosp{} K_{b}$ is a backward inclusion.
    \end{enumerate}
\end{definition}

\begin{definition}[Representative summation]
\label{dfn:rep-sum}
For two  intervals $[b,i],[b',i]\in \Pers_p^H(\Fcal_i)\cup\Pers_p^B(\Fcal_i)$
so that $b\lessB b'$,
let $\rep=\Set{z_\aG\mid \aG\in [b,i]}$ and
$\rep'=\Set{z'_\aG\mid \aG\in [b',i]}$ be $p$-th representatives
for $[b,i]$ and $[b',i]$
respectively.
The \defemph{sum} of $\rep$ and $\rep'$, denoted 
$\rep\repsum\rep'$, is a sequence of cycles $\Set{\bar{z}_\aG\mid\aG\in[b',i]}$,
for the interval $[b',i]$,
so that
\begin{itemize}
    \item 
    If $b<b'$ 
    then $\bar{z}_\aG=z_\aG+z'_\aG$ for each $\aG$; (Figure~\ref{fig:rep-sum}: (i) top, (ii), (iii))
    \item 
    If $b'<b$,
    then $\bar{z}_\aG=z'_\aG$ for $\aG<b$
    and $\bar{z}_\aG=z_\aG+z'_\aG$ for $\aG\geq b$. (Figure~\ref{fig:rep-sum}: (i) bottom, (iv))
\end{itemize}
\end{definition}
\revise{
\begin{remark}It is not hard to see that summing two homology (resp.\ boundary) representatives
always results in a homology (resp.\ boundary) representative.
On the other hand, summing a boundary representative with a homology representative 
results in a homology representative 
because summing a boundary with a cycle does not change the homology class of 
the cycle.
\end{remark}}


\begin{proposition}
    The sequence $\rep\repsum\rep'$ in Definition~\ref{dfn:rep-sum}
    is a $p$-th representative for
    $[b',i]\in \Pers_p^H(\Fcal_i)\cup\Pers_p^B(\Fcal_i)$. 
    \label{prop:repsum}
\end{proposition}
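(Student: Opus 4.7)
The plan is to verify, for each of the four cases in Definition~\ref{dfn:ind-order}, that $\rep \repsum \rep' = \Set{\bar{z}_\aG \mid \aG \in [b',i]}$ meets the three requirements for being a $p$-th representative for $[b',i]$: (a) each $\bar z_\aG$ lies in the correct group ($\Zyc(K_\aG)$ if $[b',i]\in\Pers_p^H(\Fcal_i)$, $\Bnd(K_\aG)$ if $[b',i]\in\Pers_p^B(\Fcal_i)$); (b) the compatibility of consecutive entries under $\linmap_\aG$ or $\linmapB_\aG$; (c) the birth condition at $b'$. The death condition at $i$ is vacuous since the intervals end at the last index of $\Fcal_i$.

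Conditions (a) and (b) I expect to handle uniformly by $\Zbb_2$-linearity. In cases (ii)--(iv) both $\rep$ and $\rep'$ are of the same type, so summing cycles gives a cycle and summing boundaries gives a boundary, while the compatibility relations commute with summation. In case (i), $\rep$ is a boundary representative and $\rep'$ is a homology representative; every $z_\aG\in\Bnd(K_\aG)\subseteq\Zyc(K_\aG)$ is a cycle with zero homology class, so $\bar z_\aG\in\Zyc(K_\aG)$ and $[\bar z_\aG]=[z'_\aG]$ in $\Hm(K_\aG)$, so the compatibility of $\rep'$ transfers directly to $\bar z$. The one place requiring extra attention is the transition $\aG=b-1\to b$ in the subcases $b'<b$ of cases (i) and (iv): there $\bar z_{b-1}=z'_{b-1}$ but $\bar z_b=z_b+z'_b$, and one uses $[z_b]\in\ker(\linmap_{b-1})$ (case (iv), from the birth condition on $\rep$) or $[z_b]=0$ (case (i)) to get $\linmap_{b-1}([\bar z_b])=\linmap_{b-1}([z'_b])=[z'_{b-1}]=[\bar z_{b-1}]$.

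The main obstacle will be the birth condition at $b'$, which I handle case by case. For case (ii), I will use the structural observation that in a boundary representative the cycles are \emph{literally the same chain} along the interval (because each $\linmapB_\aG$ is an inclusion and Definition~\ref{dfn:brep-cls} forces $z_{\aG+1}=z_\aG$), so $z_{b'}=z_{b'-1}\in\Bnd(K_{b'-1})$; combined with $z'_{b'}\notin\Bnd(K_{b'-1})$ from the birth condition on $\rep'$, the sum lies in $\Bnd(K_{b'})\setminus\Bnd(K_{b'-1})$. Case (iv) is immediate because $\bar z_{b'}=z'_{b'}$. Case (i) with $b'<b$ is likewise immediate; case (i) with $b<b'$ splits by the direction of $\linmap_{b'-1}$: if backward, then $[\bar z_{b'}]=[z'_{b'}]\in\ker(\linmap_{b'-1})$ as needed, and if forward it reduces to the same argument as case (iii). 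The crux is therefore case (iii): I will exploit that the simplex $\fsimp_{b'-1}$ added at a forward homology birth is a cycle but not a boundary, whence $\Bnd(K_{b'})=\Bnd(K_{b'-1})$ and a cycle in $\Zyc(K_{b'})$ lies outside $\Zyc(K_{b'-1})$ exactly when it contains $\fsimp_{b'-1}$. Then $z'_{b'}$ contains $\fsimp_{b'-1}$ by its birth condition, while $z_{b'}$ --- homologous to $z_{b'-1}\in\Zyc(K_{b'-1})$ through a boundary in $K_{b'}$ that cannot involve $\fsimp_{b'-1}$ --- does not, so $\bar z_{b'}$ does.

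I expect the proof to be a methodical case analysis; the only non-mechanical ingredient is the cycle-but-not-boundary characterization of a new forward-homology-birth simplex, which cleanly separates the chain-level and homology-level aspects of the birth condition and ensures the set-theoretic requirement $\bar z_{b'}\notin\Zyc(K_{b'-1})$ survives the summation.
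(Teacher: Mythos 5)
Your proof is correct and follows the same strategy as the paper's (Appendix~\ref{sec:pf-prop-repsum}): a direct case analysis over the orderings of Definition~\ref{dfn:ind-order}, verifying compatibility of consecutive cycles by $\Zbb_2$-linearity and the birth condition at $b'$ case by case, with the death condition vacuous at $i$. The one place you go beyond what the paper spells out is case~(iii) (both intervals homological, forward birth at $b'$): the paper only works out the analogous sub-case with a \emph{boundary} representative $\rep$, where $z_{b'}=z_{b'-1}$ holds literally, and then declares the homology case ``similar,'' but that equality fails for a homology representative. You correctly supply the missing step---since $\sigma_{b'-1}$ has no cofaces in $K_{b'}$ it lies in no boundary of $K_{b'}$, so $z_{b'}$, being homologous in $K_{b'}$ to $z_{b'-1}\in\Zyc(K_{b'-1})$, cannot contain $\sigma_{b'-1}$ while $z'_{b'}$ must, whence $\bar z_{b'}\in\Zyc(K_{b'})\setminus\Zyc(K_{b'-1})$. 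That no-coface observation is the only non-mechanical ingredient you add; otherwise the two proofs coincide.
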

\begin{proof}
\noindent
    {\bf Case 1}, $b<b'$: In this case, every cycle $\bar{\cyc}_\alpha$,
    $\alpha\in[b',i]$, in
    $\rep\bndlsum\rep'$ satisfies that $\bar{\cyc}_\alpha=\cyc_\alpha+\cyc'_\alpha$. 
    It can be verified that we only have three different cases: (i) $b\in \pinds^B(\Fcal_i)$, $b'\in \pinds^H(\Fcal_i)$, (ii) $b\in \pinds^B(\Fcal_i)$, $b'\in \pinds^B(\Fcal_i)$, and (iii) 
    $b\in \pinds^H(\Fcal_i)$, $b'\in \pinds^H(\Fcal_i)$.
    We take up the case (i) and the proof for the other cases is similar.
    Assuming $K_\alpha\inctosp{}K_{\alpha+1}$ is  forward
    for $b'\leq \aG<i$,
    we have $\linmapB_\alpha(\cyc_\alpha)=\cyc_\alpha=\cyc_{\alpha+1}$ and
    $\linmap_\alpha([\cyc'_\alpha])=[\cyc'_{\alpha+1}]$. Therefore,
    $\linmap_\alpha([\bar{\cyc}_\alpha])=
    \linmap_\alpha([\cyc_\alpha])+\linmap_\alpha([\cyc'_\alpha])=
    [\cyc_{\alpha+1}]+[\cyc'_{\alpha+1}]= [\bar{\cyc}_{\alpha+1}]$ as required. The same applies
     when $K_\alpha\bakinctosp{}K_{\alpha+1}$ is backward. Finally, we verify that the birth and death conditions
    hold for $\bar{\cyc}_{b'}$.
    First assume that $K_{b'-1}\inctosp{} K_{b'}$ is forward. 
    For the birth condition,
    we have that $\bar{\cyc}_{b'}=\cyc_{b'}+\cyc'_{b'}\in \Zyc(K_{b'})\setminus \Zyc(K_{b'-1})$ because $\cyc'_{b'}\in \Zyc(K_{b'})\setminus \Zyc(K_{b'-1})$ and $\cyc_{b'}=\cyc_{b'-1}\in \Zyc(K_{b'})\cap \Zyc(K_{b'-1})$.
     One can also verify the death condition for the cycle
    $\bar{\cyc}_i$. 
    For a backward $K_{b'-1}\bakincto K_{b'}$, 
    we omit the verification for the birth and death conditions.
    It follows that in case (i), $\rep\bndlsum \rep'$ is
    a homology representative for $[b',i]$, $b'\in \pinds^H(\Fcal_i)$, as required. 
    We also have that the justification for case (ii) and (iii) can be similarly done.

    \smallskip
    \noindent
    {\bf Case 2}, $b' < b$: In this case, we have $\bar{\cyc}_\alpha=\cyc'_\alpha$ for $\alpha\in [b',b-1]$ and
    $\bar{\cyc}_\alpha=\cyc_\alpha + \cyc'_\alpha$ for $\alpha \in [b,i]$. 
    We have only two possible cases: (i) $b\in \pinds^B(\Fcal_i)$ and
    $b'\in \pinds^H(\Fcal_i)$; (ii) $b,b'\in \pinds^H(\Fcal_i)$ and $K_{b-1}\bakinctosp{}K_b$ is backward. Again, using the case analysis, one can show that $\linmap_\alpha([\bar{\cyc}_\alpha])=[\bar{\cyc}_{\alpha+1}]$ if $\linmap_\alpha$ is forward and $\linmap_\alpha([\bar{\cyc}_{\alpha+1}])=[\bar{\cyc}_\alpha]$ otherwise.
    Moreover, the birth and death conditions can also be verified easily implying that $\rep\bndlsum \rep'$ in both cases is a homology representative for $[b',i]$, $b'\in \pinds^H(\Fcal_i)$.
\end{proof}
\begin{remark}
From Figure~\ref{fig:rep-sum},
it is not hard to see that the representative resulting from the summation
in Definition~\ref{dfn:rep-sum} is still a valid representative for the interval.
For example, in case (iii) of Figure~\ref{fig:rep-sum},
the resulting representative is  valid because $z_{b'}+z'_{b'}$
still contains $\sG_{b'-1}$ so that the birth condition in Definition~\ref{dfn:rep-cls}
still holds.
\end{remark}




We then define wires and bundles as mentioned in Section~\ref{sec:intro}
which compresses the zigzag representatives in a compact form.


\begin{definition}[Wire]
\label{dfn:wire}
A \defemph{wire} is a cycle $\wire{i}\in\Zyc(K_i)$ 
with a \defemph{starting index} $i\in \pinds^H(\Fcal)\cup \pinds^B(\Fcal)$ 
such that
\begin{enumerate}[label=(\roman*),leftmargin=2.2em]
    \item 
    $K_{i-1}\inctosp{} K_i$ is  forward  and $\wire{i}\in \Zyc(K_i)\setminus \Zyc(K_{i-1})$, or
    \item 
    $K_{i-1}\bakinctosp{} K_i$ is backward  and $\wire{i} \in \Bnd(K_{i-1})\setminus \Bnd(K_i)$, or
    \item 
    $K_{i-1}\inctosp{} K_i$ is forward  and $\wire{i}\in \Bnd(K_i)\setminus \Bnd(K_{i-1})$.
\end{enumerate} 
We also say that  $\wire{i}$ is a wire \defemph{at} index $i$.
The wires satisfying (i) or (ii) are also called \defemph{non-boundary wires}
whereas those satisfying (iii) are called \defemph{boundary wires}. 
\end{definition}
\begin{remark}
    In cases (i) and (ii) above, $i\in \pinds^H(\Fcal)$, whereas in case (iii), $i\in \pinds^B(\Fcal)$.
\end{remark}


\begin{definition}[Wire bundle]
A \defemph{wire bundle} $\bndl$ (or simply \defemph{bundle}) is a set 
of wires with distinct starting indices.
The \defemph{sum} of $\bndl$ with another wire bundle $\bndl'$,
denoted $\bndl\bndlsum\bndl'$,
is the symmetric difference of the two sets.
We also call $\bndl$ a \defemph{boundary bundle}
if $\bndl$ contains only boundary wires and
call $\bndl$ a \defemph{non-boundary bundle} otherwise.
\end{definition}

As evident later,
given an input filtration $\Fcal$,
a wire at an index $i$ is  {fixed} in our algorithm,
and we always denote such a wire 
as $\wire{i}$.
Hence, a wire bundle 
 is simply stored as a list of wire indices 
in our algorithm.
Since there are $O(\filtcnt)$ indices in $\Fcal$,
a bundle summation takes $O(\filtcnt)$ time.
\begin{definition}
    Let $[b,d]\in \Pers^H(\Fcal_i)\cup\Pers^B(\Fcal_i)$. A wire bundle
    $W$ is said to \defemph{generate a representative} for $[b,d]$ $($or simply
    \defemph{represents} $[b,d]$$)$ if the sequence of cycles $\{ \cyc_\alpha=\sum_{\wire{j}\in W, j\leq \alpha}\wire{j}\mid \alpha\in [b,d]\}$ is a representative for $[b,d]$.
\end{definition}

\begin{remark}
In the sum 
$\cyc_\alpha=\sum_{\wire{j}\in W, j\leq \alpha}\wire{j}$
in the above definition,
we consider each $\wire{j}$ 
and the sum $z_\aG$
as a cycle in the total complex $\Ktot$.
Notice that 
if $\bndl$ 
generates a representative for
$[b,d]$,
we may have that $\wire{j}\not\in \Zyc(K_\aG)$
for a $\wire{j}$ in the sum, but
we still can have
$z_\aG\in \Zyc(K_\aG)$
due to cancellation of simplices in the symmetric difference.
See Figure~\ref{fig:repchange}.
\revise{Also notice that a wire in a bundle could start before the 
interval that the bundle represents. See the interval on the right 
in Figure~\ref{fig:bnd-sum}.}
\end{remark}

\revise{Figures~\ref{fig:repchange} and~\ref{fig:bnd-sum}} provide examples for representatives
generated by bundles.
Notice that since we always consider bundles that generate representatives in this paper,
bundle summations also respect the order `$\lessB$' in Definition~\ref{dfn:ind-order}. 
The main benefits of introducing wire bundles are that  (i) they can be summed efficiently and (ii) 
\revise{explicit representatives can also be generated from them efficiently}
(see the Algorithm \textsc{ExtRep} below for the detailed process).

\begin{algr*}[\textsc{ExtRep}: Extracting representative from bundle]\label{alg:bndl-gen}
Let $\bndl=\Set{\wire{\iG_1},\ldots,\wire{\iG_\ell}}$
be a wire bundle where $\iG_1<\cdots<\iG_\ell$
and let $\rep$ be the representative for an interval $[b,d]$ generated by $\bndl$.
We can assume $\iG_\ell\leq d$ because wires
in $W$ with indices greater than $d$ do not contribute to a cycle in $\rep$.
Moreover, let $\iG_k$ be the last index in $\iG_1,\ldots,\iG_\ell$
no greater than $b$.
We have that $z=\sum_{j=\iG_1}^{\iG_k}\wire{j}$
is the cycle at indices $[b,\iG_{k+1})$ in $\rep$.
We then let $\lG$ iterate over $k+1,\ldots,\ell-1$.
For each $\lG$, we add $\wire{\iG_{\lG}}$ to $z$,
and the resulting $z$ is the cycle at indices $[\iG_{\lG},\iG_{\lG+1})$ in $\rep$.
Finally, we
add $\wire{\iG_{\ell}}$ to $z$,
and $z$ is the cycle at indices $[\iG_{\ell},d]$ in $\rep$.
Since at every $\lG\in[k+1,\ell]$, we add at most one cycle to another cycle, the whole process involves $O(m)$ chain summations.
\end{algr*}



\section{Representatives as wire bundles}
\label{sec:rep-as-bndl}

We first give a brief overview of  our algorithm 
to illustrate 
how representatives in zigzag modules can be compactly stored as
wire bundles
(see Section~\ref{sec:alg-oview} for details of the algorithm).
Consider computing only the homology barcode $\PersH(\Fcal)$.
Our algorithm in Section~\ref{sec:alg-oview}  stems from 
an idea 
for computing $\PersH(\Fcal)$
that directly maintains representatives for the intervals:
Before each iteration $i$,
assume that we are given intervals in $\PersH(\Filt{i})$ and their representatives.
The aim of iteration $i$ is to compute those for $\PersH(\Filt{i+1})$
by processing the inclusion $K_i\leftrightarrowsp{\fsimp_i} K_{i+1}$.
For the computation,
we only need to pay attention to those \emph{active}
intervals in $\PersH(\Filt{i})$
ending with $i$ 
because the non-active intervals and their representatives have already been finalized.
Consider an interval $[b,i]\in\PersH(\Filt{i})$ with a representative $\rep$. {\blue If the cycle $\cyc_i$ at index $i$} in $\rep$ resides in $K_{i+1}$,
the interval $[b,i]\in\PersH(\Filt{i})$ can be directly extended 
to $[b,i+1]\in\PersH(\Filt{i+1})$ along with the representative where
the cycle at $i+1$ equals $z_i$.
Otherwise, if $z_i\nsubseteq K_{i+1}$,
we perform summations on the representatives 
to modify $\cyc_i$ in $\rep$  
so that $z_i$ becomes contained in $K_{i+1}$
and $[b,i]$ can be extended.

In iteration $i$, whenever the inclusion $K_i\leftrightarrow K_{i+1}$ 
generates a new birth index $i+1\in\pindsH(\Filt{i+1})$,
we have a new active interval $[i+1,i+1]\in\PersH(\Filt{i+1})$.
We assign a representative $\rep^{i+1}=\Set{z_{i+1}}$  to $[i+1,i+1]$
where $z_{i+1}$ only needs to satisfy the birth condition in Definition~\ref{dfn:rep-cls}.
Suppose that $[i+1,i+1]\in\PersH(\Filt{i+1})$
is directly extended
to $[i+1,k]\in\PersH(\Filt{k})$
in later iterations
without its representative $\rep^{i+1}=\Set{z_\aG\mid\aG\in[i+1,k]}$
being modified by representative summations.
We then have that $z_\aG=z_{i+1}$ for each $\aG$,
which means that $\rep^{i+1}$ is generated
by the wire $\wire{i+1}:=z_{i+1}$
(see  Figure~\ref{fig:simp-wire}.).
Suppose that we have  a similar 
interval $[j+1,k]\in\PersH(\Filt{k})$
with a representative $\rep^{j+1}$ also generated by 
a single wire $\wire{j+1}$,
where $j+1>i+1$ and $K_{j}\bakincto K_{j+1}$ is backward.
Then, $j+1\lessB i+1$ according to Definition~\ref{dfn:ind-order},
and we can
sum $\rep^{j+1}$ to $\rep^{i+1}$
to get a new representative for $[i+1,k]$. 
We have that 
the new representative is  generated by the bundle $\Set{\wire{i+1},\wire{j+1}}$
as illustrated in Figure~\ref{fig:simp-wire}.

\begin{figure}[!tb]
  \centering
  \includegraphics[width=\linewidth]{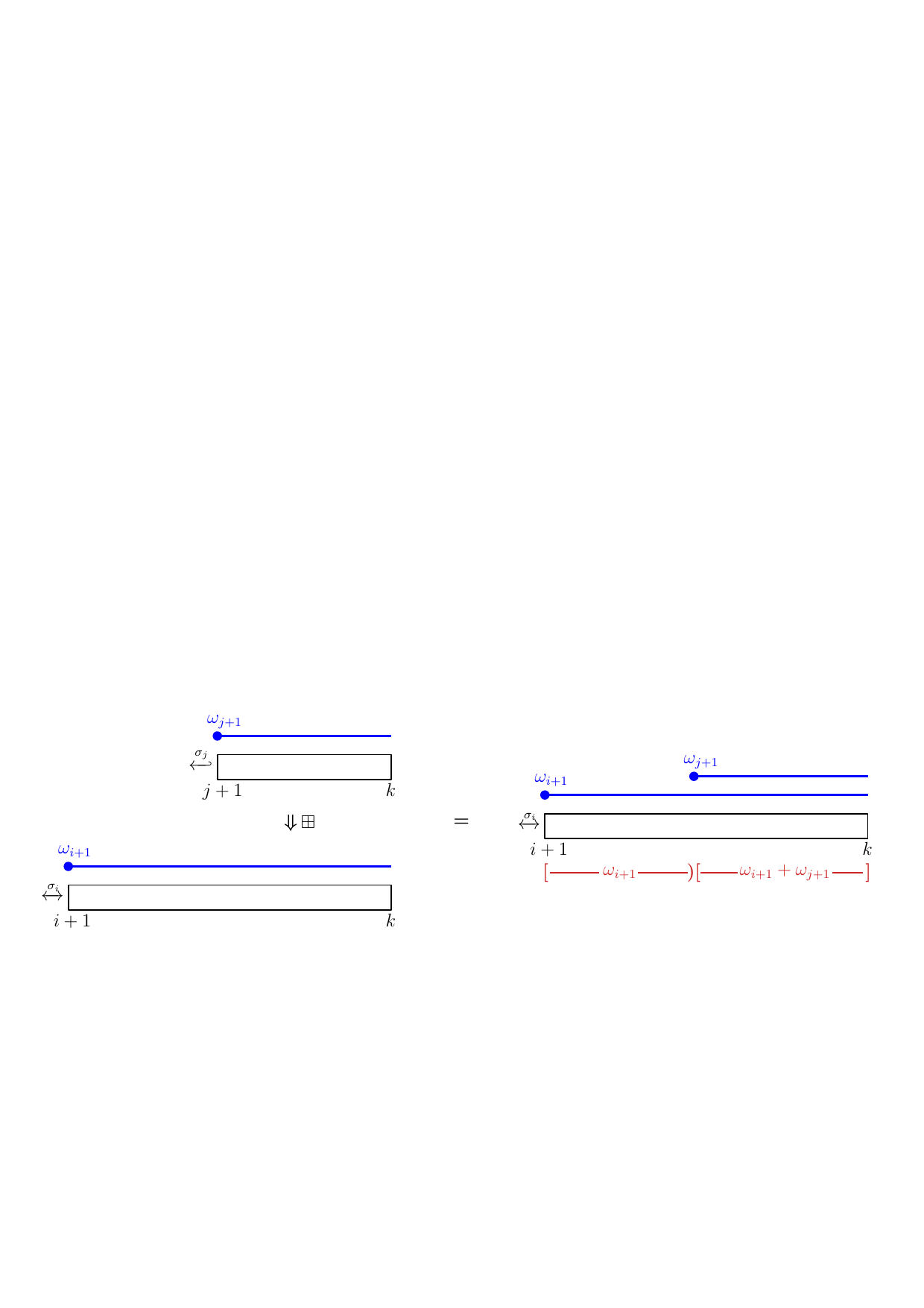}\opt{SN}{\bigskip}
  \caption{Summing two representatives each generated by a single wire
  results in a new representative generated by a bundle containing the two wires.}
  \label{fig:simp-wire}
  \end{figure}
\begin{figure}[htb]
  \centering
  \includegraphics[width=\linewidth]{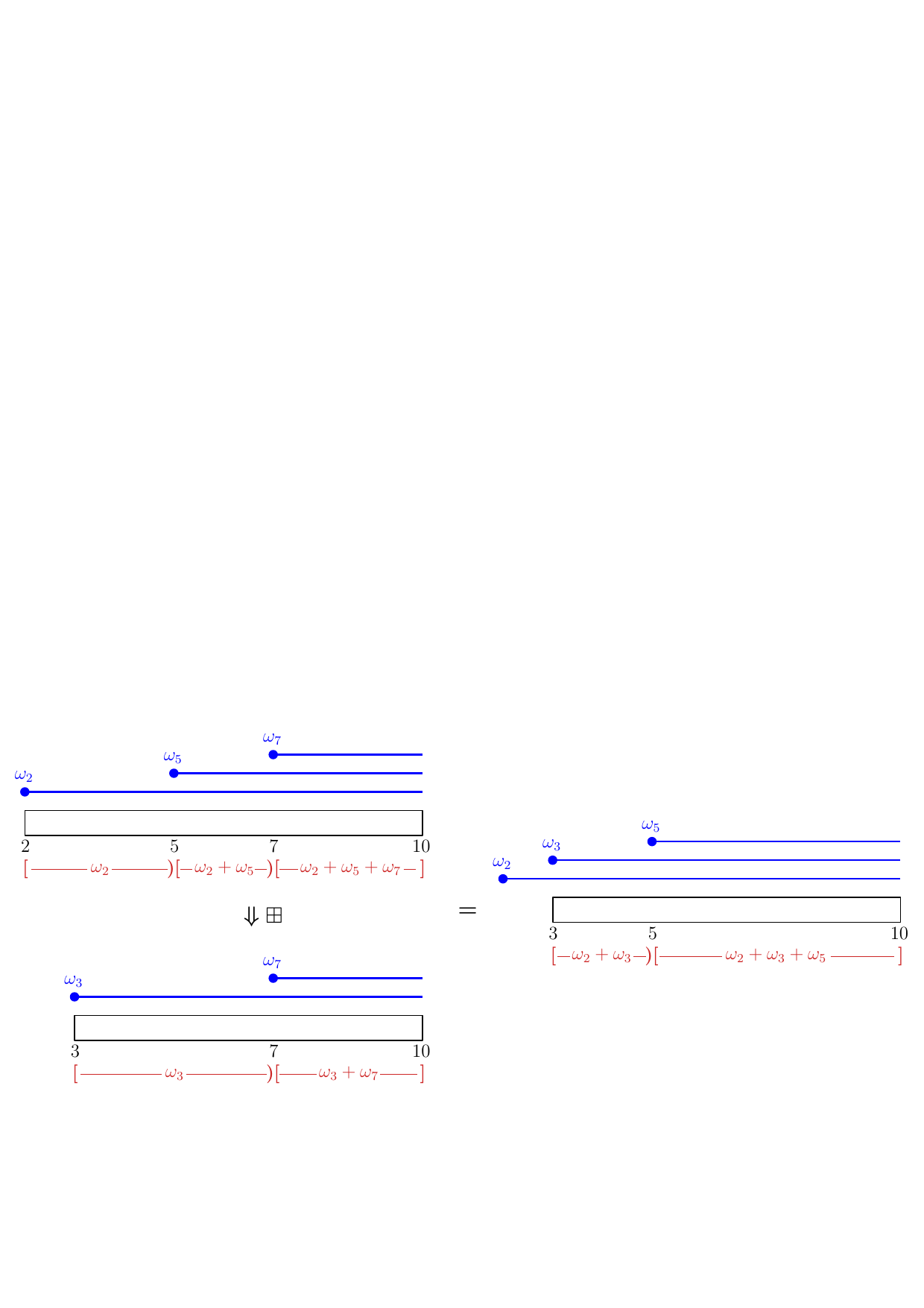}\opt{SN}{\bigskip}
  \caption{Summing two representatives generated by the bundles 
  $\Set{\wire{2},\wire{5},\wire{7}},\Set{\wire{3},\wire{7}}$
  respectively
  results in a new representative generated by the bundle $\Set{\wire{2},\wire{3},\wire{5}}$.}
  \label{fig:bnd-sum}
  \end{figure}
In the computation of $\PersH(\Fcal)$,
a representative can only  be changed due to a direct extension
or a representative summation
after being created.
It is easy to verify that a representative is generated by a bundle
after being created
and that a representative is still generated by a bundle after being extended
given that it is generated by a bundle before the extension.
We then only need to show that $\rep\repsum\rep'$ is still generated by 
a bundle if two representatives $\rep$ and $\rep'$
are generated by bundles.
Figure~\ref{fig:bnd-sum} provides an example 
involving two intervals $[2,10],[3,10]$
whose representatives
are generated by the 
bundles $\Set{\wire{2},\wire{5},\wire{7}},\Set{\wire{3},\wire{7}}$
respectively.
The resulting representative of the summation
is generated by the bundle $\Set{\wire{2},\wire{3},\wire{5}}$ which is the symmetric difference.
In general, 
for two bundles $\bndl$ and $\bndl'$
generating  representatives  $\rep$ and $\rep'$ respectively,
it could happen that the representative $\rep^*$ generated by 
$\bndl\bndlsum\bndl'$ is not equal to $\rep\repsum\rep'$.
However, we have that each cycle in $\rep^*$
is always homologous to the corresponding cycle in $\rep\repsum\rep'$.
The rest of the section formally justifies the claim.

The reader may wonder why we need the boundary module and its representatives at all.
While theoretically $\PersH(\Fcal)$ and the bundles generating the representatives
can be computed independently without considering the boundary module $\Bnd(\Fcal)$,
introducing $\Bnd(\Fcal)$ 
helps us achieve the $O(\filtcnt^2\simpcnt)$ time complexity.
See Remark~\ref{rmk:bnd-mod-nece} in Section~\ref{sec:alg-oview} for a detailed explanation.
 
For any interval in
$\Pers^H(\Fcal_i)\cup\Pers^B(\Fcal_i)$, our algorithm  maintains
a wire bundle generating its 
representative. Proposition~\ref{prop:wiresum} lets us replace representatives with wire bundles.

\begin{proposition}
    Let $[b,i],[b',i]\in \PersH(\Fcal_i)\cup\PersB(\Fcal_i)$ and $b\lessB b'$. Suppose that $W$ and $W'$ generate a representative for $[b,i]$ and
    $[b',i]$ respectively. 
    Then, the sum $W\bndlsum W'$ generates a representative for $[b',i]\in \PersH(\Fcal_i)\cup\PersB(\Fcal_i)$.
    \label{prop:wiresum}
\end{proposition}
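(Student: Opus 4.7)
The plan is to compare the cycle sequence $\{w_\alpha\}_{\alpha\in[b',i]}$ generated by $W\bndlsum W'$ with the representative $\rep\repsum\rep'$, which is already valid by Proposition~\ref{prop:repsum}. Since coefficients lie in $\Zbb_2$, the symmetric-difference identity
\[
w_\alpha \;=\; P^W_\alpha + P^{W'}_\alpha, \qquad P^X_\alpha := \sum_{\wire{j}\in X,\, j\le\alpha} \wire{j},
\]
holds pointwise, because each wire in $W\intsec W'$ appears in both partial sums on the right and cancels. When $b\le b'$ (covering cases (ii), (iii), and the part of case (i) of Definition~\ref{dfn:ind-order} with $b<b'$), every $\alpha\in[b',i]$ satisfies $\alpha\ge b$, so $P^W_\alpha=z_\alpha$ and $P^{W'}_\alpha=z'_\alpha$, giving $w_\alpha=z_\alpha+z'_\alpha=\bar z_\alpha$ by Definition~\ref{dfn:rep-sum}. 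Hence $\{w_\alpha\}=\rep\repsum\rep'$ and Proposition~\ref{prop:repsum} closes this case.

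When $b'<b$ (covering case (iv) together with the remaining portion of case (i) of Definition~\ref{dfn:ind-order}), the identification $w_\alpha=\bar z_\alpha$ still holds for $\alpha\in[b,i]$, so the delicate range is $\alpha\in[b',b)$, where $P^W_\alpha$ is a pre-birth tail sum with no counterpart in $\rep$. The central claim to establish is that for every such $\alpha$, $P^W_\alpha\in\Zyc(K_\alpha)$, and moreover $[P^W_\alpha]=0$ in $\Hm(K_\alpha)$ when $[b,i]\in\PersH(\Fcal_i)$ (respectively $P^W_\alpha$ coincides with a boundary already absorbed by $z'_\alpha$ when $[b,i]\in\PersB(\Fcal_i)$). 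Granted this, $w_\alpha=P^W_\alpha+z'_\alpha$ is a cycle in $K_\alpha$ homologous to (or equal to) $\bar z_\alpha=z'_\alpha$; zigzag compatibility across consecutive indices follows by linearity applied term-wise to the two sequences $\{P^W_\alpha\}$ and $\{z'_\alpha\}$, and the birth and death conditions at $b'$ and $i$ transfer from those of $\rep'$ and $\rep$.

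The main obstacle is proving this pre-birth lemma. I would unwind $P^W_\alpha$ backwards from $z_b=P^W_b$ via $P^W_{\alpha-1}=P^W_\alpha+[\alpha\in W]\,\wire{\alpha}$, and use Definition~\ref{dfn:wire} together with the direction of the step $K_{\alpha-1}\leftrightarrow K_\alpha$ to track the simplicial contribution of each wire removed. Since $W$ produces the valid cycle $z_b\in\Zyc(K_b)$ at the birth, the wires of $W$ with indices in $(\alpha,b]$ must conspire through symmetric-difference cancellation to eliminate all simplices of $z_b$ lying outside $K_\alpha$, so that the truncated sum $P^W_\alpha$ remains closed in $K_\alpha$. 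Combined with parallel control of its homology (or boundary) class derived from the birth type of each wire in the classification of Definition~\ref{dfn:wire}, this yields the lemma and completes the proof.
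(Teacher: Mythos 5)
Your structural decomposition matches the paper's: case $b<b'$ is handled by identifying the bundle-generated sequence with $\rep\repsum\rep'$ and invoking Proposition~\ref{prop:repsum}, and case $b'<b$ reduces to a ``pre-birth'' claim about the tail sums $P^W_\alpha$ for $\alpha<b$. You have also correctly identified the content of that claim, which in the paper is Proposition~\ref{prop:alive-till}: the truncated bundle $\bar W=\{\wire{j}\in W\mid j<b\}$ is a \emph{boundary} bundle whose partial sums $P^W_\alpha$ lie in $\Bnd(K_\alpha)$ for all $\alpha\le b$; with that in hand one gets $\bar z'_\alpha = z'_\alpha + P^W_\alpha$ homologous to $z'_\alpha$, exactly as in the paper.

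The gap is in how you propose to prove that pre-birth claim. You sketch a purely local unwinding argument: start from the valid birth cycle $z_b=P^W_b$ and peel off wires one at a time, arguing that the wires of $W$ with indices in $(\alpha,b]$ must ``conspire'' to keep $P^W_\alpha$ closed (and a boundary) in $K_\alpha$. This cannot be derived from the hypothesis ``$W$ generates a representative for $[b,i]$'' together with Definition~\ref{dfn:wire} alone, because that hypothesis constrains $P^W_\alpha$ only for $\alpha\in[b,i]$ and places no restriction on which wires appear at indices $<b$. In particular, nothing in the hypothesis prevents $\bar W$ from containing a type-(i) non-boundary wire $\wire{j}$, and such a wire need not be a boundary (or even a cycle) of $K_\alpha$ for $j\le\alpha<b$; a single such wire would break the claim. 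The fact that $\bar W$ consists only of boundary wires and is ``alive till $b$'' is an invariant of the \emph{algorithm's} bundle history, not of the abstract definition. The paper establishes it by induction over the entire sequence of bundle-creation and bundle-summation operations, using that every summation respects the total order $\lessB$ of Definition~\ref{dfn:ind-order} (so when $W^y$ is added to $W^x$, the source index $y$ lies in the same restricted set $X$ and one can propagate the invariant). Your proposal has no mechanism to obtain this global control; without it, Case~2 does not close.
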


\revise{Before proving Proposition~\ref{prop:wiresum}, 
we prove a result (Proposition~\ref{prop:alive-till}) that says wires in a bundle for an interval $I$ added with other intervals produce only boundaries
outside the interval $I$ and those boundaries reside in the
respective complexes.}
This, in turn, helps to prove
Proposition~\ref{prop:wiresum}.

{\blue Each time we extend an interval 
$[b,i-1]$ in $\Pers^H(\Fcal_{i-1})$
(resp.\ $\Pers^B(\Fcal_{i-1})$)  
to $[b,i]$ in $\Pers^H(\Fcal_i)$ (resp.\ $\Pers^B(\Fcal_i)$), 
the birth index $b$ does not change.}
So we denote the bundle associated with $[b,i]$ as $\bndl^b$
in this section.
After being created,
 $W^b$ only changes when another bundle $W^x$ is added to it because
the representative generated by $W^x$ needs to be added to the representative 
for $[b,i]$
generated by $W^b$. 
\begin{definition}
    A boundary bundle $W$ is said to be \defemph{alive until index $b$}
    if the cycle
    $\cyc_\alpha=\sum_{\wire{j}\in W, j\leq \alpha}\wire{j}$ is in $\Bnd(K_\alpha)$ for every $\alpha \leq b$.
    Notice that $\cyc_\aG$ is the empty chain if
    there is no $\wire{j}\in W$ s.t.\ $j\leq \aG$.
\end{definition}

\begin{proposition}\label{prop:alive-till}
    Let $[b,i]\in \Pers^H(\Fcal_i)$ with
    $K_{b-1}\bakinctosp{} K_b$
    being backward, or $[b,i]\in \Pers^B(\Fcal_i)$. 
    Let $\bar{W}^b\subseteq W^b$ be defined as $\bar{W}^b=\{\wire{j}\in W^{b} \,|\, j< b\}$.
    Then,
    $\bar{W}^b
    $ 
    is a boundary
    bundle alive until $b$.
\end{proposition}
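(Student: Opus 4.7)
The plan is to induct on the sequence of algorithmic operations that can modify the bundles, proving in parallel a subsidiary claim: if $[x,j]\in\PersB(\Fcal_j)$ then the whole bundle $W^x$ is a boundary bundle. At the creation of $W^b$ we have $W^b=\{\wire{b}\}$, so $\bar{W}^b=\emptyset$ is trivially a boundary bundle alive till $b$; the subsidiary claim holds at creation since $\wire{x}$ is a boundary wire by Definition~\ref{dfn:wire}(iii). Interval extensions do not alter bundle membership, so the only nontrivial event is a summation $W^b\leftarrow W^b\bndlsum W^x$ for some $x\lessB b$. The subsidiary claim propagates immediately: when the modified bundle has birth in $\pindsB(\Fcal)$, only clause (ii) of Definition~\ref{dfn:ind-order} can apply to the added bundle, forcing the added bundle to also have birth in $\pindsB(\Fcal)$, so it is boundary by induction and the symmetric difference stays boundary.

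For the main claim, the new $\bar{W}^b$ equals $(\text{old }\bar{W}^b)\bndlsum \{\wire{j}\in W^x:j<b\}$. Under our hypothesis on $b$, only clauses (i), (ii), and (iv) of Definition~\ref{dfn:ind-order} can apply: clause (iii) is excluded because it requires $K_{b-1}\incto K_b$ forward with $b\in\pindsH(\Fcal)$. In clauses (i) and (ii) the entire $W^x$ is a boundary bundle by the subsidiary claim; in clause (iv) the wires of $W^x$ with index $<b$ all lie inside $\bar{W}^x$ (since $b<x$), which is a boundary bundle by the induction hypothesis applied to $[x,i]\in\PersH(\Fcal_i)$ with $K_{x-1}\bakincto K_x$. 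Thus the newly contributed wires are boundary wires, and because partial sums are additive mod $2$, the symmetric difference of two bundles alive till $b$ is itself alive till $b$ once both summands are.

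It remains to verify that the partial sum of $\{\wire{j}\in W^x:j<b\}$ up to any $\alpha\le b$ lies in $\Bnd(K_\alpha)$. For $\alpha<b$ this is straightforward: if $\alpha\ge x$ the partial sum equals the representative cycle at index $\alpha$ of the interval carried by $W^x$, and if $\alpha<x$ it is a partial sum of $\bar{W}^x$ controlled by the induction hypothesis at $x$. The delicate regime is $\alpha=b$, where the partial sum up to $b-1$ naturally lies in $\Bnd(K_{b-1})$ and we must upgrade it to $\Bnd(K_b)$. In clause (ii), $K_{b-1}\incto K_b$ is forward so $\Bnd(K_{b-1})\subseteq\Bnd(K_b)$ and the upgrade is automatic. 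In clause (i) with $x\le b-1$, the partial sum equals the $\PersB$-representative cycle $z'_{b-1}$, which equals $z'_b\in\Bnd(K_b)$ because $\linmapB_{b-1}$ is the backward inclusion $\Bnd(K_{b-1})\leftarrow\Bnd(K_b)$.

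The remaining sub-case, which I expect to be the main obstacle, is $x>b$ (possible under clause (i) with $[x,i]\in\PersB(\Fcal_i)$, or always under clause (iv)). Here the contributing wires form a subset of $\bar{W}^x$, and alive-till-$x$ only guarantees that the partial sum up to $b-1$ lies in $\Bnd(K_{b-1})$. The key observation is that alive-till-$x$ also gives the partial sum up to $b$ lying in $\Bnd(K_b)$, and these two sums coincide because $\bar{W}^x$ contains no wire at index $b$: our hypothesis $K_{b-1}\bakincto K_b$ with $b\in\pindsH(\Fcal)$ forces $\wire{b}$ to be a non-boundary wire via Definition~\ref{dfn:wire}(ii), whereas $\bar{W}^x$ is a boundary bundle and therefore contains only wires whose starting indices lie in $\pindsB(\Fcal)$. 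Combining all cases with the induction hypothesis on the old $\bar{W}^b$ yields that the new $\bar{W}^b$ is a boundary bundle alive till $b$, closing the induction.
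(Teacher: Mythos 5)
Your proof is correct and follows essentially the same approach as the paper's: both induct over the sequence of operations that modify bundles and analyze the summation step (where another bundle respecting the order $\lessB$ is symmetric-differenced in). You organize the cases by the clauses of Definition~\ref{dfn:ind-order} rather than the paper's coarser split by numeric comparison of the two birth indices, and you handle the $\alpha=b$ edge case and the subsidiary claim (that $W^x$ is entirely a boundary bundle when $x\in\pinds^B(\Fcal)$) more explicitly than the paper does, but the underlying argument is the same.
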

\begin{proof}
    Let $X$ be the set containing each index $x\leq i$ so that either 
    $x\in \pinds^H(\Fcal_i)$ with backward $K_{x-1}\bakinctosp{}K_x$, or simply
    $x\in \pinds^B(\Fcal_i)$. Let
    $\bar{W}^x=\{\wire{j}\in W^x\,|\, j < x\}$
    for each $x\in X$.
       
    Let $a_0,a_1,\ldots,a_k$
    denote the series of all operations that change a bundle $W^x$ for $x\in X$,
    i.e., each $a_j$
    either creates a bundle $W^x$
    at an index $x\in X$ or
    sums a bundle $W^y$ to $W^x$ for $x\in X$. 
    Notice that $y$ is necessarily in $X$ because the bundle summation respects the order `$\lessB$'  in Definition~\ref{dfn:ind-order}.
    {\blue We show by induction on the number of operations $k$}
    that the  bundle $W^x$ for any  $x\in X$ maintains the property that the derived $\bar{W}^x$ is a boundary  bundle
    alive until index $x$. The operation $a_0$ starts a representative
    with a single cycle $\cyc\in \Zyc(K_x)$ at some index $x\in X$ with the
    wire $\wire{x}=\cyc$.
    The bundle $W^x$ then equals $\Set{\wire{x}}$ and
    the claim is trivially true.

    For the inductive step, assume that the claim is true after an operation
    $a_\ell$ for $\ell\geq 0$. If the the operation $a_{\ell+1}$ starts a representative,
    the claim holds trivially. Assume that $a_{\ell+1}$ adds a wire bundle
    $W^y$, $y\in X$, \revise{to $W^x$}. 
    By the inductive hypothesis,
    $\bar{W}^x=\{\wire{j} \,|\, \wire{j}\in W^x, j < x\}$ and $\bar{W}^y=\{\wire{j} \,|\, \wire{j}\in W^y, j < y\}$ are boundary bundles alive until $x$ and $y$ respectively. 
    There are two possibilities:
    
    (i) $y> x$: 
    Let $W=W^x\bndlsum W^y$. Observe that, for  $\alpha \leq x$,
    the cycle $z_\alpha=\sum_{\wire{j}\in W, j\leq \alpha}\wire{j}
    = \sum_{\wire{j}\in \bar{W}^x, j\leq \alpha}\wire{j}+\sum_{\wire{j}\in \bar{W}^y, j\leq \alpha}\wire{j}$
    is a boundary in $K_\alpha$ because the
    two cycles given by the two sums on RHS are boundaries 
    in $K_\alpha$. 
    It follows that $\bar{W}=\{\wire{j}\in W \,|\, j< x\}$
    is a boundary
    bundle alive until $x$
    and the inductive hypothesis still holds for $x$.
    
    (ii) $y < x$: Let $W=W^x\bndlsum W^y$. In this case,
    $y\in\PersB(\Fcal)$.
    It can be verified that, since $y\in\PersB(\Fcal)$,
    $W^y$ is necessarily a boundary bundle
    because the bundle summations respect 
    the order in Definition~\ref{dfn:ind-order}.
    Then, the bundle
    $W'=\{\wire{j}\,|\, \wire{j}\in W^y, j < x\}$ is a boundary bundle
    alive until $x$. 
    By the inductive hypothesis,
    the wire bundle $\bar{W}^x$ is a boundary bundle
    alive until $x$. Therefore, the
    sum $W'\bndlsum \bar{W}^x$, which is the updated $\bar{W}^x$, is a boundary bundle 
    alive until $x$; the inductive hypothesis follows.
\end{proof}

\begin{proof}[Proof of Proposition~\ref{prop:wiresum}]
Let $\rep=\{\cyc_\alpha\,|\,\alpha\in [b,i]\}$, 
$\rep'=\{\cyc'_\alpha\,|\,\alpha\in [b',i]\}$ be the representatives generated
by $W$ and $W'$ respectively. We have the following cases to consider:

    \noindent
    {\bf Case 1}, $b<b'$: In this case, every cycle $\bar{\cyc}_\alpha$,
    $\alpha\in[b',i]$, in
    $\rep\bndlsum\rep'$ satisfies that $\bar{\cyc}_\alpha=\cyc_\alpha+\cyc'_\alpha$. Since
    $\cyc_\alpha=\sum_{\wire{j}\in W,j\leq \alpha} \wire{j}$ and
    $\cyc'_\alpha=\sum_{\wire{j}\in W',j\leq \alpha} \wire{j}$, we have that
    $$
    \bar{\cyc}_\alpha=\sum_{\wire{j}\in W,j\leq\alpha} \wire{j}+ \sum_{\wire{j}\in W',j\leq \alpha} \wire{j}
    = \sum_{\wire{j}\in W\bndlsum W', j\leq \alpha} \wire{j}.
    $$ This means that
    $W\bndlsum W'$ generates $\rep\bndlsum\rep'$, a representative for $[b',i]$ by Proposition~\ref{prop:repsum}.

    \noindent
    {\bf Case 2}, $b' < b$: We have 
    $\bar{\cyc}_\alpha$ in $\rep\bndlsum\rep'$ equals $\cyc'_\alpha$
    for $b'\leq \alpha <b$.
    However,
    the wire bundle $W$ may have wires in $\bar{W}=\{\wire{j}\in W\,|\, j < b\}$
    whose addition to $\cyc'_\aG$, $b'\leq\alpha<b$, may create
    a different cycle in the representative generated by $W\bndlsum W'$. By Proposition~\ref{prop:alive-till}, $\bar{W}$ is necessarily a boundary
     bundle alive until  index $b$.
     Let $\bar{\cyc}'_\alpha$ be 
     the cycle at index $\aG$ in the representative generated by $W\bndlsum W'$,
     where $b'\leq \alpha <b$.
     Then, 
   $\bar{\cyc}'_\alpha=\sum_{\wire{j}\in W\bndlsum W', j\leq \alpha}\wire{j}=\cyc'_\alpha+\sum_{\wire{j}\in \bar{W},{j\leq \alpha}}\wire{j}$, 
   which means that $\bar{\cyc}'_\alpha$
    is homologous
    to $\cyc'_\alpha$. Hence, $\bar{\cyc}'_\alpha$ can be taken as a cycle in a
    representative for $[b',i]$. 
    This means that $W\bndlsum W'$ generates a representative for $[b',i]$.
\end{proof}

\begin{theorem}
There is a wire bundle $W=\{w_\iota \,|\, \iota \in \pinds^H(\Fcal)\cup \pinds^B(\Fcal)\}$ so that
a representative for any $[b,d]\in\PersH(\Fcal)\union \PersB(\Fcal)$
is generated by a wire bundle that is a subset of $W$.
\label{thm:bundle}
\end{theorem}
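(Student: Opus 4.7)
The plan is to prove this by induction on the filtration length, showing that there is an algorithmic process which, as it sweeps through $\Fcal$ from left to right, simultaneously defines the wires $w_\iota$ (one per index $\iota \in \pinds^H(\Fcal)\cup\pinds^B(\Fcal)$) and maintains for every currently active interval $[b,i]\in\PersH(\Filt{i})\cup\PersB(\Filt{i})$ a wire bundle $W^b$ that (i) consists only of wires already defined and (ii) generates a representative for $[b,i]$. Once each interval closes, its final $W^b$ witnesses the theorem for that interval.

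For the base case $i=0$ there are no active intervals and no wires, so the invariant holds vacuously. For the inductive step at iteration $i \mapsto i+1$, I would analyze the arrow $K_i \leftrightarrowsp{\fsimp_i} K_{i+1}$ according to whether $\fsimp_i$ induces a new birth in $\pindsH$ or $\pindsB$, a death, or neither. When a new birth index $\iota = i+1$ appears, I define the wire $w_\iota$ to be a cycle satisfying the corresponding clause of Definition~\ref{dfn:wire} (it lies in $\Zyc(K_{i+1})\setminus\Zyc(K_i)$, in $\Bnd(K_{i+1})\setminus\Bnd(K_i)$, or in $\Bnd(K_{i})\setminus\Bnd(K_{i+1})$ depending on the case), then assign the singleton bundle $\{w_\iota\}$ to the newborn interval. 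The singleton trivially generates a representative because it produces exactly one cycle $w_\iota$ at index $\iota$, which satisfies the birth condition of Definition~\ref{dfn:rep-cls} or~\ref{dfn:brep-cls} by construction.

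For extending existing active intervals across the inclusion, I would split into the cases where the last cycle of a representative already lies in $K_{i+1}$ (in which case the bundle is unchanged) and where it does not. In the latter case, using Proposition~\ref{prop:cyclebasis}, the offending cycle can be expressed as a sum of cycles from other active representatives whose cycles do extend, so the modification is realized by summing representatives, respecting the order $\lessB$ from Definition~\ref{dfn:ind-order}. By Proposition~\ref{prop:wiresum}, each such summation replaces the bundle $W^b$ by $W^b \bndlsum W^x$, which still generates a representative for $[b,i+1]$ and remains a subset of the wires defined so far. A similar case analysis handles death indices, where one active interval is removed without introducing new wires. Because every modification is a bundle summation over previously defined wires, the union $W = \bigcup_i W_i$ is indexed exactly by $\pinds^H(\Fcal)\cup\pinds^B(\Fcal)$ with one wire per index, as required.

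The main obstacle I anticipate is verifying in detail that the cycle modifications required to extend representatives through a backward inclusion or through a forward inclusion that kills a class can always be realized by bundle summations of the form permitted by Proposition~\ref{prop:wiresum}, i.e.\ that the order constraint $b \lessB b'$ can always be respected. This requires a careful case analysis using Proposition~\ref{prop:cyclebasis} to rewrite the non-extendable cycle in terms of active representatives of the appropriate birth type, together with the observation that boundary bundles stay alive (Proposition~\ref{prop:alive-till}) so that summing a boundary bundle into a homology bundle at an earlier index only changes the representative up to a boundary, preserving the representative property. Once this case analysis is in place, the theorem follows immediately by taking $W$ to be the accumulated set of wires at the end of the sweep through $\Fcal$.
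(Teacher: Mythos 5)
Your proposal is correct and takes essentially the same constructive, left-to-right inductive route as the paper: define a new wire at each birth index $\iota\in\pinds^H(\Fcal)\cup\pinds^B(\Fcal)$, assign it the singleton bundle, and maintain the bundles of all active intervals by $\lessB$-respecting summations justified via Propositions~\ref{prop:cyclebasis}, \ref{prop:wiresum}, and (for the subtle backward/boundary interactions) \ref{prop:alive-till}. The case analysis you rightly flag as the remaining obstacle is exactly where the paper's proof (its Cases~1--4, with Cases~3--4 in the appendix) concentrates its effort, and the only ingredient you did not mention is Proposition~\ref{prop:pn-paring-w-rep}, invoked at the end to certify that the intervals so constructed actually give $\PersH(\Fcal)$ and $\PersB(\Fcal)$.
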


\begin{proof}
{\blue We give a constructive proof
by showing that we have 
a wire bundle $W_i=\{w_\iota\,|\, \iota\in \pinds^H(\Fcal_i)\cup\pinds^B(\Fcal_i)\}$ so that for every 
$[b,d]\in\Pers^H(\Fcal_i)\union\Pers^B(\Fcal_i)$, there is a wire bundle $W^{[b,d]}\subseteq W_i$ that generates a representative for $[b,d]$.
The base case when $i=0$ holds trivially.}
For the inductive step, consider extending the filtration $\Fcal_i$ to $\Fcal_{i+1}$ while assuming the hypothesis for $\Fcal_i$. Since any
$[b,d]\in\Pers^H(\Fcal_i)\union\Pers^B(\Fcal_i)$
where $d< i$ is not affected by
the extension, we do not consider them in the arguments below.

\smallskip
\noindent
{\bf Case 1,} $K_i\inctosp{\sigma_i}K_{i+1}$ and
$i+1\in \pinds^H(\Fcal_{i+1})$: Any $[b,i]\in\Pers^H(\Fcal_i)$ extends to
$[b,i+1]\in\Pers^H(\Fcal_{i+1})$ because the representative cycle $\cyc_i$ at index $i$ for $[b,i]$ is also in $K_{i+1}$ and thus we choose $\cyc_{i+1}=\cyc_i$ for
$[b,i+1]$. Then, the wire bundle $W^{[b,i]}$ also represents $[b,i+1]$.
The same holds for intervals in $\Pers^B(\Fcal_i)$.
We also have a new interval $[i+1,i+1]\in \Pers^H(\Fcal_{i+1})$. 
Let
a new wire $\wire{i+1}$ 
be any cycle in $\Zyc(K_{i+1})$  containing $\sigma_i$.
We have that the bundle $\Set{\wire{i+1}}$
generates a representative for $[i+1,i+1]$.
Subsets of the wire bundle $W_{i+1}=W_i\cup \{\wire{i+1}\}$ then represent intervals  in both $\Pers^H(\Fcal_{i+1})$ and $\Pers^B(\Fcal_{i+1})$. 

\smallskip
\noindent
{\bf Case 2,} $K_i\inctosp{\sigma_i} K_{i+1}$ 
and $i\in \ninds^H(\Fcal_{i})$: In this case, $\partial \sigma_i$
becomes a boundary in $K_{i+1}$,  an interval in $\Pers^H(\Fcal_i)$
does not extend to $i+1$, and a new interval $[i+1,i+1]$ in $\Pers^B(\Fcal_{i+1})$ begins. To determine the interval $[b,i]\in \Pers^H(\Fcal_i)$ that 
does not extend to $i+1$,
consider
the cycle $\partial \sigma_i$ which is in $\Zyc(K_i)\setminus \Bnd(K_i)$.
Let $[b_1,i],\ldots,[b_k,i]$ be all the
intervals in $\Pers^H(\Fcal_i)$ and
$\Pers^B(\Fcal_i)$ 
with representatives $\rep_1,\ldots, \rep_k$ respectively.
Let $\cyc_1,\ldots,\cyc_k$
be their cycles at index $i$ respectively. Since these cycles form
a basis for $\Zyc(K_i)$ by Proposition~\ref{prop:cyclebasis},
the cycle $\partial \sigma_i$ is a linear combination of them. Without loss of generality (WLOG), assume that after reindexing, $\partial \sigma_i=\cyc_1+\cdots+\cyc_\ell$ 
for some $\ell\leq k$
where $b_1\lessB\cdots\lessB b_\ell$.
Add the representatives $\rep_1,\ldots, \rep_{\ell-1}$ to $\rep_\ell$ to obtain a new representative $\rep_\ell'$ for $[b_\ell,i]\in\PersH(\Filt{i})$ (Proposition~\ref{prop:repsum}). The cycle of $\rep_\ell'$ at index $i$ is $\partial\sigma_i$ by construction which becomes a boundary in $K_{i+1}$. Therefore, $\rep'_{\ell}$ is
a representative for $[b_\ell,i]\in\PersH(\Filt{i+1})$
and
$W^{[b_1,i]}\bndlsum\cdots\bndlsum W^{[b_{\ell},i]}$ 
represents $[b_\ell,i]\in\PersH(\Filt{i+1})$
by Proposition~\ref{prop:wiresum}.
All other intervals
in $\Pers^H(K_i)$ and $\Pers^B(K_i)$ extend to $\Pers^H(K_{i+1})$ and
$\Pers^B(K_{i+1})$ with their wire bundles remaining the same. 
A new interval $[i+1,i+1]\in \Pers^B(\Fcal_{i+1})$ 
begins
whose representative 
is given by the cycle $\partial \sigma_i$.
So, the wire $\wire{i+1}=\partial \sigma_i$ 
represents this interval in $\Pers^B(\Fcal_{i+1})$. Subsets of the wire bundle $W_{i+1}=W_i\cup\{ \wire{i+1}\}$ then generate
representatives for
all intervals in $\Pers^H(\Fcal_{i+1})\cup\Pers^B(\Fcal_{i+1})$.

\noindent
{\bf Case 3}, $K_i\bakinctosp{\sigma_i} K_{i+1}$ and
$i+1\in \pinds^H(\Fcal_{i+1})$: In this case, an interval $[b,i]\in \Pers^B(\Fcal_i)$ 
does not extend to $i+1$
and a new interval $[i+1,i+1]\in \Pers^H(\Fcal_{i+1})$ begins. 
 Let $[b_1,i],\ldots,[b_k,i]$ be all the
intervals in $\Pers^B(\Fcal_i)$ 
with representatives $\rep_1,\ldots, \rep_k$ respectively,
and let $z^j_i$
be the cycle at index $i$ in $\rep_j$ for each $1\leq j\leq k$.
Since $z^j_i\in\Bnd(K_i)$, $z^j_i$ has a `bounding chain' $c^j\in\Chn(K_i)$
s.t.\ $z^j_i=\partial(c^j)$.
Assuming after reindexing $z^j_1,\ldots,z^j_\ell$
are all the cycles
whose bounding chains contain $\sigma_i$
where $b_1\lessB\cdots\lessB b_\ell$.
We add $\rep_1$ to $\rep_2,\ldots,\rep_\ell$ to remove $\sigma_i$ from their bounding chains. 
Then, the new representatives $\rep'_2:=\rep_1\bndlsum\rep_2,\ldots,
\rep'_\ell:=\rep_1\bndlsum\rep_\ell$ for the intervals $[b_2,i],\ldots,[b_\ell,i]$ 
can extend to $i+1$ because their bounding chains now do not contain $\sigma_i$.
By Proposition~\ref{prop:wiresum},
$W^{[b_j,i]}\bndlsum W^{[b_1,i]}$ represents
$[b_j,i+1]\in\PersB(\Filt{i+1})$ for $2\leq j\leq \ell$.
So, we update $W^{[b_j,i]}$ as $W^{[b_j,i]}\bndlsum W^{[b_1,i]}$ for $2\leq j\leq k$.
The interval $[b_1,i]$ does not extend to $i+1$
with the wire bundle
$W^{[b_1,i]}$ still representing $[b_1,i]\in\PersB(\Filt{i+1})$.
A new interval $[i+1,i+1]\in \Pers^H(\Fcal_{i+1})$ begins with a representative  $\Set{\partial \sigma_i\textit{}}$ which is generated by a new wire
$\wire{i+1}=\partial\sigma_i$. Subsets of the wire bundle $W_{i+1}=W_i\cup\{\wire{i+1}\}$ then generate representatives for all intervals in $\Pers^H(\Fcal_{i+1})\cup\Pers^B(\Fcal_{i+1})$.

\smallskip
\noindent
{\bf Case 4}, $K_i\bakinctosp{\sigma_i} K_{i+1}$ and
$i\in \ninds^H(\Fcal_{i})$: In this case, an interval $[b,i]\in \Pers^H(\Fcal_i)$
does not extend to $i+1$.
Let $\rep_1,\ldots,\rep_k$ be all the representatives for 
$[b_1,i],\ldots, [b_k,i]\in \Pers^H(\Fcal_i)$ respectively 
whose cycles
at index $i$ contain $\sigma_i$.
WLOG, assume that $b_1\lessB\cdots\lessB b_k$.
We cannot extend these representatives
to $i+1$ because $\sigma_i\nsubseteq K_{i+1}$.
We add $\rep_1$ to $\rep_2,\ldots,\rep_k$
to obtain new representatives $\rep_2',\ldots,\rep_k'$ for the intervals 
whose cycles at index $i$ now do not contain $\sigma_i$. 
Similar to  previous cases, the  bundle $W^{[b_j,i]}\bndlsum W^{[b_1,i]}$ 
represents $[b_j,i+1]\in\PersH(\Filt{i+1})$
for $2\leq j\leq k$. 
So, we update 
$W^{[b_j,i]}$ 
as $W^{[b_j,i]}\bndlsum W^{[b_1,i]}$ for $2\leq j\leq k$. 
The interval $[b_1,i]$ does not extend to $i+1$
and $\rep_1$ remains a representative for $[b_1,i]\in \PersH(\Filt{i+1})$. 

\smallskip
To finish the proof, we also need to show that the zigzag barcodes
we have are correct whenever we proceed from $\Filt{i}$
to $\Filt{i+1}$.
Since all intervals we have admit representatives,
the correctness of the barcodes follows from Proposition~\ref{prop:pn-paring-w-rep}
presented in Section~\ref{sec:alg-oview}.
\end{proof}



\section{Algorithm}
\label{sec:alg-oview}
We present the $O(\filtcnt^2\simpcnt)$ algorithm \textsc{WiredZigzag} computing $\PersH(\Fcal)$,
$\PersB(\Fcal)$, and their representatives based on exposition in the previous section.
As mentioned,
the general idea of the algorithm is to 
maintain a wire bundle for each interval in $\PersH(\Fcal)$
and $\PersB(\Fcal)$
so that the bundle generates a representative for the interval.
In each iteration $i$,
the algorithm processes the inclusion $K_i\leftrightarrowsp{\fsimp_{i}}K_{i+1}$
in $\Fcal$ starting with $i=0$.
Before iteration $i$,
we assume that
we have computed
all intervals in $\PersH(\Filt{i})$ 
and $\PersB(\Filt{i})$ 
along with the wire bundles.
The aim of iteration $i$ is  to compute those for 
$\PersH(\Filt{i+1})$ 
and $\PersB(\Filt{i+1})$.
In each iteration $i$,
we  have two sets of \emph{active} intervals 
(ending with $i$)
for $\PersH(\Filt{i})$ 
and $\PersB(\Filt{i})$ respectively,
\[
\bigSet{[\hbirth_j,i]\in\PersH(\Filt{i})\mid j=1,2,\ldots,r},\quad
\bigSet{[\bbirth_k,i]\in\PersB(\Filt{i})\mid k=1,2,\ldots,q}\]
where $r$ is the dimension of $\Hm(K_i)$
and $q$ is the dimension of $\Bnd(K_i)$.
All non-active intervals in $\PersH(\Filt{i})$
(resp.\ $\PersB(\Filt{i})$)
are automatically  carried into $\PersH(\Filt{i+1})$
(resp.\ $\PersB(\Filt{i})$)
and their wire bundles  do not change.
For each homology interval $[\hbirth_j,i]$, 
we let
$\bndl^j
$
denote the 
(non-boundary)
bundle 
maintained for $[\hbirth_j,i]$,
and for each boundary interval $[\bbirth_k,i]$, 
we let
$\bndbndl^k
$
denote the 
(boundary)
bundle 
maintained for $[\bbirth_k,i]$.
At the end of the algorithm,
we have
all intervals and bundles in $\PersH(\Filt{\filtcnt})=\PersH(\Fcal)$
and $\PersB(\Filt{\filtcnt})=\PersB(\Fcal)$.
We then generate a representative
for each interval
from its bundle.

\subsection{Maintenance of pivoted matrices}
\label{sec:matrices}
For the computation,
we maintain 
three 0-1 matrices $\cycmat$, $\bndmat$, and $\chnmat$ where each column
represents a chain such that
the $k$-th entry of the column
equals 1 iff the simplex with index $k$ belongs  to the chain.
We also do not differentiate a matrix column and the chain it represents
when describing the algorithm.
In each iteration $i$, 
the following invariants hold:
\begin{enumerate}
    \item $\cycmat$ has $r$ columns each corresponding to
    an active interval in $\PersH(\Filt{i})$ s.t.\ a column $\cmatcol{j}$ 
    equals the last cycle (at index $i$) in the representative for $[\hbirth_j,i]$ generated
    by $\bndl^j$.

    \item $\bndmat$ has $q$ columns each corresponding to
    an active interval in $\PersB(\Filt{i})$ s.t.\ a column 
    $\bndmat[k]$
    equals the last cycle 
    in the representative for $[\bbirth_k,i]$ generated
    by $\bndbndl^k$.

    \item $\chnmat$ also has $q$ columns s.t.\ $\bndmat[k]=\partial(\chnmat[k])$
    for each $k$.
\end{enumerate}

By Proposition~\ref{prop:cyclebasis},
columns in $\cycmat$ and $\bndmat$
form a basis of $\Zyc(K_i)$.
Throughout the algorithm,
we also always ensure that 
columns in $\cycmat$, $\bndmat$, and $\chnmat$ form a basis for 
$\Chn(K_i)$
in each iteration.
This can be inductively proved based on the details
of the algorithm presented in this section.
The detailed justification is omitted.
Let
the \emph{pivot} of 
a matrix column 
be the index
of its lowest entry equal to 1.
Our algorithm  maintains the invariant that \emph{columns 
in $\cycmat$ and $\bndmat$ altogether have distinct pivots}
so that getting the coordinates of a cycle in $\Zyc(K_i)$
in terms of the basis represented by
columns of $\cycmat$ and $\bndmat$
takes $O(\simpcnt^2)$ time.
This is an essential part of our algorithm as demonstrated in its description below.

\begin{remark}
Since a bundle is just a set of wire indices
and we have no more than one wire born at each index,
we maintain all wires in a matrix representing the containments
$\wire{i}\mapsto \{\sigma\in K_i \mid \sigma\in \wire{i}\}$. Similarly, bundles are maintained in 
a matrix representing the containments $W\mapsto \{\wire{i}\in W\}$. 
\end{remark}

\subsection{Iteration $i$ of {\sc WiredZigzag}}
Iteration $i$ of the algorithm has the following 
processes
in different cases
(more details such as
how we ensure the distinct pivots in $\cycmat,\bndmat$
and how to determine the injective/surjective cases
are provided in  Section~\ref{sec:impl}):
\begin{description}
\smallskip
\item[$K_{i}\inctosp{\fsimp_i} K_{i+1}$ is forward, $\linmap_i$ is injective:]

We have:
\begin{description}
\smallskip
    \item[Birth in homology module \textup{(}$i+1\in\pindsH(\Fcal)$\textup{):}]
An interval $[i+1,i+1]\in\PersH(\Filt{i+1})$ active in the next iteration is created.
We find a 
new non-boundary wire $\wire{i+1}$
which is a cycle in $K_{i+1}$
containing $\fsimp_i$
so that condition (i) in  Definition~\ref{dfn:wire} is satisfied.
We also have a new non-boundary
bundle $\Set{\wire{i+1}}$ 
for $[i+1,i+1]\in\PersH(\Filt{i+1})$.
(The validity of the new bundle 
$\Set{\wire{i+1}}$ can be seen by examining 
Definitions~\ref{dfn:rep-cls} and~\ref{dfn:wire}.)
Each $[\hbirth_j,i]\in\PersH(\Filt{i})$ 
extends to be
an active interval
$[\hbirth_j,i+1]\in\PersH(\Filt{i+1})$.
Since $\cmatcol{j}\subseteq K_{i+1}$ because $K_{i}\inctosp{\fsimp_i} K_{i+1}$ is forward, 
$\bndl^j$ stays the same for the next iteration.
Finally, since $\wire{i+1}$ is the cycle at index $i+1$
in the representative generated by the bundle 
$\Set{\wire{i+1}}$,
we add $\wire{i+1}$ as a new column to $\cycmat$
corresponding to the new active interval.
\end{description}
\smallskip
Since $\Bnd(K_i)=\Bnd(K_{i+1})$,
each $[\bbirth_k,i]\in\PersB(\Filt{i})$ 
extends to be
an active interval
$[\bbirth_k,i+1]\in\PersB(\Filt{i+1})$
and the bundle $\bndbndl^k$
 stays the same.


\smallskip
\item[$K_{i}\inctosp{\fsimp_i} K_{i+1}$ is forward, $\linmap_i$  is surjective:]
Both of the following happen:
\begin{description}
\smallskip
\item[Death  in homology module \textup{(}$i\in\nindsH(\Fcal)$\textup{):}]
By performing reductions on $\partial\fsimp_i$
and the columns in $\cycmat$ and $\bndmat$,
we find a subset of columns ($J\subseteq \Set{1,\ldots,r}$) in $\cycmat$
such that
\begin{equation}\label{eqn:sum-to-zero}
\sum_{j\in J}[\cmatcol{j}]=[\partial\fsimp_i]\text{ in }\Hm(K_i).
\end{equation}
Let $\hbirth_\lG$ be the maximum birth index in $\Set{\hbirth_j\mid j\in J}$
w.r.t the order `$\lessB$'.
We have that
$[\hbirth_\lG,i]$ ceases to be active,
i.e., $[\hbirth_\lG,i]\in\PersH(\Filt{i+1})$.
Let $\bndl^*$ be the sum of all the bundles in
$\Set{\bndl^j\mid j\in J}$.
We have that $\bndl^*$ generates a representative $\rep^*$
for $[\hbirth_\lG,i]\in\PersH(\Filt{i+1})$.
(To see this, notice that the last cycle in
$\rep^*$
is $\sum_{j\in J}\cmatcol{j}$
which is homologous to $\partial\fsimp_i$ in $K_i$,
and so the  death condition  in Definition~\ref{dfn:rep-cls}
is satisfied.
The validity of $\bndl^*$ then follows from Proposition~\ref{prop:wiresum}.)
For each $j\in\Set{1,\ldots,r}\setminus\Set{\lG}$,
$[\hbirth_j,i]\in\PersH(\Filt{i})$ 
extends to be
$[\hbirth_j,i+1]\in\PersH(\Filt{i+1})$
for which $\bndl^j$ stays the same
because $K_{i}\inctosp{\fsimp_i} K_{i+1}$ is forward.
Finally, we delete $\cmatcol{\lG}$ from $\cycmat$.

\smallskip
\item[Birth in boundary module \textup{(}$i+1\in\pindsB(\Fcal)$\textup{):}]
A new active interval $[i+1,i+1]\in\PersB(\Filt{i+1})$ is created.
We have a 
new boundary wire $\wire{i+1}=\partial\fsimp_i$
satisfying condition (iii) in  Definition~\ref{dfn:wire}.
We also have a new boundary
bundle $\Set{\wire{i+1}}$ 
for $[i+1,i+1]\in\PersB(\Filt{i+1})$.
Each $[\bbirth_k,i]\in\PersB(\Filt{i})$ 
extends to be
$[\bbirth_k,i+1]\in\PersB(\Filt{i+1})$
for which $\bndbndl^k$ stays the same.
Finally,
we add  $\partial\fsimp_i$ as a new column to $\bndmat$
and
add a new column containing only $\fsimp_i$ to $\chnmat$.
\end{description}

\smallskip
\item[$K_{i}\bakinctosp{\fsimp_i} K_{i+1}$ is backward, $\linmap_i$  is surjective:]
Both of the following happen:
\begin{description}
\smallskip
    \item[Birth in homology module \textup{(}$i+1\in\pindsH(\Fcal)$\textup{):}]
A new active interval $[i+1,i+1]\in\PersH(\Filt{i+1})$ is created.
We find a 
new non-boundary wire $\wire{i+1}$
which is
a cycle homologous to $\partial\fsimp_i$ in $K_{i+1}$
so that condition (ii) in  Definition~\ref{dfn:wire} is satisfied.
The rest of the processing is the same as in
the previous birth event for the homology module.
Notice that each
$\bndl^j$ stays the same
because $\Zyc(K_i)=\Zyc(K_{i+1})$.

\smallskip
\item[Death  in boundary module \textup{(}$i\in\nindsB(\Fcal)$\textup{):}]
{\blue Since $\fsimp_i$ is not in a cycle in $K_i$}
and  columns in $\cycmat$, $\bndmat$, and $\chnmat$
form a basis of $\Chn(K_i)$,
at least one column 
in $\chnmat$
contains $\fsimp_i$.
Whenever there are 
two columns $\chnmat[j]$, $\chnmat[k]$ in $\chnmat$
containing $\fsimp_i$
with $\bbirth_k\lessB\bbirth_j$,
set
$\chnmat[j]=\chnmat[j]+\chnmat[k]$,
$\bndmat[j]=\bndmat[j]+\bndmat[k]$,
and $\bndbndl^j=\bndbndl^j\bndlsum\bndbndl^k$
to remove $\fsimp_i$ from $\chnmat[j]$.
After this, only one column $\chnmat[\lG]$
 in $\chnmat$ contains $\fsimp_i$
and we have that $[\bbirth_\lG,i]\in\PersB(\Filt{i+1})$ ceases to be active.
Notice that  $\bndbndl^\lG$ still generates a
representative for $[\bbirth_\lG,i]\in\PersB(\Filt{i+1})$.
For each $k\in\Set{1,\ldots,q}\setminus\Set{\lG}$,
$[\bbirth_k,i]\in\PersB(\Filt{i})$ 
extends to be
$[\bbirth_k,i+1]\in\PersB(\Filt{i+1})$
for which $\bndbndl^k$ now stays the same because
$\fsimp_i\not\in\chnmat[k]$
so that $\bndmat[k]\in\Bnd(K_{i+1})$.
Finally, we delete $\bndmat[\lG]$ from $\bndmat$
and delete $\chnmat[\lG]$ from $\chnmat$.
\end{description}

\smallskip
\item[$K_{i}\bakinctosp{\fsimp_i} K_{i+1}$ is backward, $\linmap_i$  is injective:]
We have:
\begin{description}
\smallskip
\item[Death  in homology module \textup{(}$i\in\nindsH(\Fcal)$\textup{):}]
We have that at least one column 
in $\cycmat$
contains $\fsimp_i$.
(To see this, notice that
$\fsimp_i$ cannot be in a column in $\bndmat$
because $\fsimp_i$ has no cofaces in $K_i$.
So $\fsimp_i$ has to be in a column in $\cycmat$
because  $\cycmat$ and $\bndmat$ provide a basis for $\Zyc(K_i)$
and there is a cycle in $K_i$ containing $\fsimp_i$.)
Whenever there are two columns $\cycmat[j]$, $\cycmat[k]$
in $\cycmat$
with $\hbirth_k\lessB\hbirth_j$
containing $\fsimp_i$,
set
$\cycmat[j]=\cycmat[j]+\cycmat[k]$
and $\bndl^j=\bndl^j\bndlsum\bndl^k$
to remove $\fsimp_i$ from $\cycmat[j]$.
After this, only one column $\cmatcol{\lG}$ in $\cycmat$
contains  $\fsimp_i$ and
we have that $[\hbirth_\lG,i]\in\PersH(\Filt{i+1})$ ceases to be active.
The remaining processing resembles 
what is done
in the
death event for the boundary module
and is omitted.
Notice that we also need to remove $\fsimp_i$ from $\chnmat$
and the details are provided in Section~\ref{sec:impl}.
\end{description}
\smallskip
Since $\Bnd(K_i)=\Bnd(K_{i+1})$,
each $[\bbirth_k,i]\in\PersB(\Filt{i})$ 
extends to be
$[\bbirth_k,i+1]\in\PersB(\Filt{i+1})$
and the bundle $\bndbndl^k$
 stays the same.
\end{description}

\begin{remark}\label{rmk:pairing-struct}
We can also consider 
our algorithm 
to have a `pairing of birth/death points'
structure as adopted by the algorithm
for computing standard persistence~\cite{edelsbrunner2000topological},
where, e.g.,  $\hbirth_1,\ldots,\hbirth_r$ 
are carried as `unpaired' birth indices to be paired
for the homology module.
\end{remark}

The following proposition from~\cite[Proposition~9]{dey2021computing}
helps draw our conclusion:

\begin{proposition}
\label{prop:pn-paring-w-rep}
Let 
$\pi:\pindsH(\Fcal_i)\to\nindsH(\Fcal_i)$ be a bijection.
If every $b\in\pindsH(\Fcal_i)$ satisfies that $b\leq\pi(b)$ and the interval $[b,\pi(b)]$
has a representative,
then $\PersH(\Fcal_i)=\Set{[b,\pi(b)]\mid b\in\pindsH(\Fcal_i)}$.
\end{proposition}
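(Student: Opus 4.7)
The plan is to identify $\PersH(\Fcal_i)$ with the candidate multiset $\Set{[b,\pi(b)] \given b\in\pindsH(\Fcal_i)}$ by constructing an explicit interval decomposition from the given representatives and then invoking the uniqueness (Krull--Schmidt) of such decompositions for zigzag modules. For each $b$, the representative $\Set{z^b_\aG \given \aG\in[b,\pi(b)]}$ defines an interval submodule $M_b\subseteq \Hm(\Fcal_i)$ by setting $(M_b)_j = \mathrm{span}([z^b_j])$ when $j\in[b,\pi(b)]$ and $0$ otherwise. The birth condition guarantees $[z^b_b]\neq 0$, the death condition guarantees that $M_b$ vanishes exactly after $\pi(b)$, and the mapping condition makes $M_b$ a genuine subfunctor isomorphic to $\Ical^{[b,\pi(b)]}$.

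Next I would carry out a dimension count at each index $j$. Since $\pi$ is a bijection with $b\leq \pi(b)$,
\[
\bigCard{\Set{b\given b\leq j\leq \pi(b)}}
= \bigCard{\pindsH(\Fcal_i)\cap[0,j]} - \bigCard{\nindsH(\Fcal_i)\cap[0,j-1]},
\]
and by the standard simplex-wise tally (each forward--injective or backward--surjective arrow contributes one to $\pindsH$, each forward--surjective or backward--injective arrow contributes one to $\nindsH$, matching the unit change in $\dim\Hm$), this same quantity equals $\dim \Hm(K_j)$. Hence $\sum_b \dim(M_b)_j = \dim\Hm(K_j)$.

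The last step is to upgrade this equality of dimensions to the identity $\bigoplus_b M_b = \Hm(\Fcal_i)$, i.e.\ to show that the classes $\Set{[z^b_j] \given b\leq j\leq \pi(b)}$ are linearly independent in $\Hm(K_j)$ for every $j$. I would do this by induction on the birth indices $b\in\pindsH(\Fcal_i)$ taken in increasing order of their appearance as one sweeps through $\Fcal_i$: at the moment of birth $b$, the birth condition from Definition~\ref{dfn:rep-cls} (either $z^b_b\in \Zyc(K_b)\setminus\Zyc(K_{b-1})$ or $[z^b_b]$ is the unique nonzero kernel element of a surjection) forces $[z^b_b]$ to be independent of all representative classes that are still alive at index $b$; the mapping conditions then propagate this independence forward along the filtration until the corresponding death. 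Granted this, $\bigoplus_b M_b$ matches $\Hm(\Fcal_i)$ dimension-wise at every index, hence coincides with it, and uniqueness of interval decomposition~\cite{Gabriel72} yields $\PersH(\Fcal_i)=\Set{[b,\pi(b)]\given b\in\pindsH(\Fcal_i)}$.

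The main obstacle is the linear independence step: the counting argument only controls the sizes, not the algebraic independence of the candidate classes, and one must use the birth conditions delicately to rule out the possibility that two different representatives' classes collide somewhere in the middle of their supports. I would expect this to be where most of the technical care is required, while Steps~1 and~2 are largely bookkeeping.
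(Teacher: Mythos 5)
The paper does not prove this proposition itself --- it is cited from~\cite{dey2021computing} (Proposition~9) --- so there is no internal proof to compare against. Your strategy (build interval submodules $M_b$ from the representatives, count dimensions pointwise using the unit step of $\dim\Hm(K_j)$ across each simplex-wise arrow, establish linear independence of the alive classes at every index, and finish with Krull--Schmidt uniqueness) is the right one, and your counting identity is correct.

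The ``main obstacle'' you flag is real, however, and you have misplaced it. First, the induction must run on the filtration index $j$, not on the birth indices: independence at $j$ is deduced from independence at $j-1$ together with the local structure of the arrow $K_{j-1}\leftrightarrow K_j$. Second, and more importantly, you claim that the birth conditions carry the delicate input while the mapping conditions alone ``propagate this independence forward.'' That fails exactly at a forward death index: there $\linmap_{j-1}$ is surjective with a one-dimensional kernel, and nothing in the mapping conditions prevents that kernel from being spanned by a nontrivial combination of still-alive classes, in which case two of their images at $j$ would collide. What saves you is the forward \emph{death} condition of Definition~\ref{dfn:rep-cls}: the class $[\cyc_d]$ of the interval dying at $d = j-1$ is required to be the nonzero element of $\ker\linmap_{j-1}$, which pins the kernel onto the single interval that is supposed to die there and leaves the rest independent. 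The backward birth condition plays the dual role of identifying the kernel of a backward surjection. The inductive step should therefore be a four-way case split on the arrow type (forward/backward crossed with injective/surjective), using the birth conditions in the two birth cases and the death condition in the forward-death case; the backward-death case needs neither extra input, since $\linmap_{j-1}$ is then injective. A related subtlety: your step~1 treats each $M_b$ as a bona fide interval module (nonzero interior classes) before you have the independence, but nonvanishing of $[z^b_j]$ as a class passes a forward death depends on the already-established independence at that index together with the death condition for the \emph{other} interval dying there --- so the two facts must be proved in tandem rather than sequentially. With that reorganization your argument is complete.
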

\begin{remark}
Similar facts hold for $\pindsB(\Filt{i})$, $\nindsB(\Filt{i})$, and $\PersB(\Filt{i})$.
\end{remark}

\begin{theorem}
The barcodes $\PersH(\Fcal)$ and 
$\PersB(\Fcal)$ along with the representatives for the intervals
can be computed in $O(\filtcnt^2\simpcnt)$ time and $O(mn)$ space.
\end{theorem}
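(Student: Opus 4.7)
The plan is to bound the cost of each iteration $i$ of \textsc{WiredZigzag} separately and then sum over the $m$ iterations, tracking the time and the peak working memory. The correctness of the algorithm (that the objects it maintains really are the barcodes and bundles claimed) has already been established via Theorem~\ref{thm:bundle} and Proposition~\ref{prop:pn-paring-w-rep}, so all that remains is the resource accounting.

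For the \emph{time bound}, I would first observe that the matrices $\cycmat$, $\bndmat$, and $\chnmat$ always have at most $O(n)$ columns, each of length at most $O(n)$, so every column operation on them costs $O(n)$. The key structural invariant, stated at the end of Section~\ref{sec:matrices}, is that the columns of $\cycmat$ and $\bndmat$ jointly have distinct pivots; this means that reducing any cycle (e.g.\ $\partial\fsimp_i$) against that basis, or eliminating the simplex $\fsimp_i$ from a set of columns by repeatedly pairing the column with smallest pivot against the next one, is a standard Gaussian-style reduction completing in $O(n^2)$ time and using at most $O(n)$ column additions. Each such column addition is paired with a bundle summation $\bndl^j\bndlsum\bndl^k$ or $\bndbndl^j\bndlsum\bndbndl^k$; since a bundle is represented as a list of $O(m)$ wire indices, one symmetric difference costs $O(m)$. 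Hence each iteration performs $O(n)$ bundle summations at $O(m)$ each, giving $O(nm)$ time per iteration, dominating the $O(n^2)$ matrix work. Summing over the $m$ iterations produces $O(m^2n)$ total time. Finally, converting each bundle to an explicit representative via \textsc{ExtRep} at the end of the algorithm costs $O(m)$ chain summations on cycles of size $O(n)$, i.e.\ $O(mn)$ per interval and $O(m^2n)$ over all intervals, matching the bound.

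For the \emph{space bound}, I would account for the stored structures. The three matrices $\cycmat,\bndmat,\chnmat$ each occupy $O(n^2)$. The wire storage records, for each of the at most $m$ wires, a cycle in $\Ktot$ of size $O(n)$, for a total of $O(mn)$. Active bundles (those attached to intervals still open at the current index) are $O(n)$ in number, and each stores a subset of the $O(m)$ wire indices, hence $O(mn)$ in total; once an interval is finalized its bundle can be streamed out (and, if desired, expanded into an explicit representative by \textsc{ExtRep} one interval at a time) without being kept in working memory. Combining these contributions yields a peak working space of $O(mn)$.

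The one subtlety I expect to be delicate is verifying that the invariants assumed in the time analysis are genuinely maintained. Specifically, I would need to check inductively that after every event (forward/backward, injective/surjective) the columns of $\cycmat$ and $\bndmat$ continue to have pairwise distinct pivots and that $\cycmat,\bndmat,\chnmat$ jointly form a basis for $\Chn(K_{i+1})$; only then does the $O(n^2)$ per-iteration matrix cost follow, together with the claim that at most $O(n)$ bundle summations occur. This relies on the details deferred to Appendix~\ref{sec:impl}, but granting those invariants the above counts close the argument and establish the $O(m^2n)$ time and $O(mn)$ space bounds.
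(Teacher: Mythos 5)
Your proposal is correct and follows essentially the same route as the paper's proof: $O(n)$ bundle summations per iteration at cost $O(m)$ each give $O(mn)$ per iteration and $O(m^2n)$ overall, with \textsc{ExtRep} adding another $O(m^2n)$, and the $O(mn)$ space coming from the $O(m)$ wires of size $O(n)$, the $O(n)$ active bundles of size $O(m)$, and the $O(n^2)$ matrices. Your explicit remark that finalized bundles (and the representatives built from them) should be streamed out rather than accumulated is a helpful clarification of a point the paper's accounting leaves implicit; the only minor slip is citing Theorem~\ref{thm:bundle} for correctness, where the paper instead appeals directly to Propositions~\ref{prop:wiresum} and~\ref{prop:pn-paring-w-rep}, though the constructive proof of Theorem~\ref{thm:bundle} does mirror the algorithm's updates.
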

\begin{proof}
First, to see that the algorithm presented above
runs in  $O(\filtcnt^2\simpcnt)$ time,
we notice that there are no more than $O(\simpcnt)$ 
summations of matrix columns and wire bundles
in each iteration,
which can be verified from the details presented in this 
section and Section~\ref{sec:impl}.
Hence, each iteration runs in $O(\filtcnt\simpcnt)$ time
where the costliest steps are the bundle summations.
At the end of the algorithm, we also need
to generate a representative for each interval from the maintained bundle.
Generating representatives 
for all the $O(\filtcnt)$ intervals
can be done in 
$O(\filtcnt^2\simpcnt)$ time (see the Algorithm \textsc{ExtRep}).
The $O(\filtcnt^2\simpcnt)$ complexity then follows. The space complexity follows from maintaining $O(m)$ wires each being a cycle of size $O(n)$,
$O(n)$ bundles for the active intervals each of size $O(m)$,
and the three matrices of size at most $O(n^2)$.

Based on Proposition~\ref{prop:pn-paring-w-rep},
the correctness of the algorithm follows from
the fact that wire bundles 
always correctly generate representatives for the intervals 
in our algorithm.
The validity of the wire bundles 
follows from Proposition~\ref{prop:wiresum}
(the only way a bundle changes after being created is by summations)
and how we assign a bundle to an interval
in the algorithm
when an interval is created or ceases to be active (finalized).
\end{proof}

\begin{remark}\label{rmk:bnd-mod-nece}
The key to achieving the $O(\filtcnt^2\simpcnt)$ time complexity are the following two invariants maintained in our 
algorithm as described in Section~\ref{sec:matrices}:
(i) pivots for the matrices 
$\cycmat$ and $\bndmat$ are always distinct
and (ii) $\cmatcol{j}$ always
    equals the last cycle in the representative for $[\hbirth_j,i]$ generated
    by $\bndl^j$.
By invariant (i), we can obtain the sum in Equation~(\ref{eqn:sum-to-zero})
in $O(\simpcnt^2)$ time
by reductions.
By invariant (ii),
we can take the sum $\bndl^*$ of the bundles
$\Set{\bndl^j\mid j\in J}$ 
based on Equation~(\ref{eqn:sum-to-zero})
for the finalized interval $[\hbirth_\lG,i]$
when a death happens in the homology module.
It ensures that
the last cycle in the representative for $[\hbirth_\lG,i]\in\PersH(\Filt{i+1})$
generated by $\bndl^*$ satisfies the death condition in Definition~\ref{dfn:rep-cls}.
As evident in Section~\ref{sec:impl},
in order to maintain the distinctness of pivots, 
one cannot avoid summations of columns in $\bndmat$
to columns in $\cycmat$.
Without incorporating the module $\Bnd(\Fcal)$ 
and its bundles, invariant (ii) would not hold
when columns in $\bndmat$
are summed to columns in $\cycmat$.
\end{remark}

\subsection{Algorithm details}
\label{sec:impl} 

We provide full details for the algorithm presented above.
For a column $c$ of the matrices maintained,
we denote the pivot of $c$ as $\pivot(c)$.
Also,
in our algorithm,
each simplex $\fsimp_i$ added in $\Fcal$ is assigned an 
{\tt ID} $i$.
This means that
a simplex has 
a new {\tt ID} when it is added again 
after being deleted.
We then present the details for the different cases.

\subsubsection{Forward  $K_i\inctosp{\fsimp_{i}}K_{i+1}$}
\label{sec:detail-fwd}
We need to determine whether $\partial\fsimp_{i}$
is already a boundary in $K_i$.
If this is true, 
a new cycle containing $\fsimp_i$ is created in $K_{i+1}$
and $\linmap_i$ is injective;
otherwise, the homology class $[\partial\fsimp_{i}]$
becomes trivial in $\Hm(K_{i+1})$
and $\linmap_i$ is surjective.
To determine this,
we  perform reductions on $\partial\fsimp_i$
and the columns in $\cycmat$ and $\bndmat$
to get a sum 
\begin{equation}\label{eqn:fwd-red}
\partial\fsimp_i=\sum_{j\in J} \cmatcol{j}+\sum_{k\in I}\bndmat[k].
\end{equation}
We then have that $\partial\fsimp_{i}$
is a boundary in $K_i$ iff $J=\emptyset$.

\paragraph{$\linmap_i$ is injective.}
\label{sec:fwd-birth-detail}
Since $\partial\fsimp_i=\sum_{k\in I}\bndmat[k]$,
we let the new wire 
$\wire{i+1}$ 
containing $\fsimp_i$
 be $\wire{i+1}=\fsimp_i+\sum_{k\in I}\chnmat[k]$,
where $\partial(\fsimp_i+\sum_{k\in I}\chnmat[k])=\partial\fsimp_i+\sum_{k\in I}\bndmat[k]=0$.
Notice that as mentioned,
we need to add $\wire{i+1}$ 
as a new column to the matrix $\cycmat$.
Since $\pivot(\wire{i+1})=i$, columns in $\cycmat$
and $\bndmat$  still have distinct pivots.

\paragraph{$\linmap_i$ is surjective.}
\label{sec:fwd-death-detail}

The subset $J$ derived from the reductions
as in \Cref{eqn:fwd-red}
is the same as the subset $J$ 
in Equation~(\ref{eqn:sum-to-zero})
in the corresponding case 
of Section~\ref{sec:alg-oview}.
So the processing for the corresponding case
described 
in Section~\ref{sec:alg-oview}
can be directly performed.
Notice that we add a new column to $\bndmat$ in this case.
Since the pivot of the new column of $\bndmat$
may conflict with the pivot of another column in $\cycmat$ or $\bndmat$,
we use a loop to repeatedly sum two columns 
whose pivots are the same 
until the pivots become distinct again.
In each iteration of the loop, three cases can happen:
\begin{enumerate}
    \item Two columns $\bndmat[j]$ and $\bndmat[k]$ have the same pivot:
    WLOG,
    assume that $\bbirth_k\lessB\bbirth_j$.
    Let $\bndmat[j]=\bndmat[j]+\bndmat[k]$,
     $\chnmat[j]=\chnmat[j]+\chnmat[k]$,
     and $\bndbndl^j=\bndbndl^j\bndlsum\bndbndl^k$.

    \item Two columns $\cmatcol{j}$ and $\bndmat[k]$ have the same pivot:
    We have $\bbirth_{k}\lessB\hbirth_j$.
    Let $\cmatcol{j}=\cmatcol{j}+\bndmat[k]$
    and  $\bndl^j=\bndl^j\bndlsum\bndbndl^k$.

    \item Two columns $\cmatcol{j}$ and $\cmatcol{k}$ have the same pivot:
    WLOG,
    assume that $\hbirth_k\lessB\hbirth_j$.
    Let $\cmatcol{j}=\cmatcol{j}+\cmatcol{k}$
    and $\bndl^j=\bndl^j\bndlsum\bndl^k$.
\end{enumerate}

Since in each iteration of the above loop we change only one column
of $\cycmat$ and $\bndmat$,
there are at most two columns of $\cycmat$ and $\bndmat$ 
with the same pivot at any time.
Hence, the above loop ends in no more than $\simpcnt$ iterations
because the pivot of the two clashed columns is always decreasing.

\subsubsection{Backward  $K_i\bakinctosp{\fsimp_{i}}K_{i+1}$}
\label{sec:detail-bak}
We need to determine whether $\fsimp_{i}$
is in a cycle $\cyc$ in $K_i$.
If this is true, 
$z$ is a cycle in $K_{i}$ but not in $K_{i+1}$
indicating that $\linmap_i$ is injective;
otherwise,
$\linmap_i$ is surjective.
Since columns in $\cycmat$ and $\bndmat$ form a basis for 
$\Zyc(K_i)$,
we only need to check whether 
$\fsimp_{i}$
is in a column in $\cycmat$ or $\bndmat$.
 Moreover, since $\fsimp_i$ has no cofaces in $K_i$,
we have that $\fsimp_i$ cannot be in a boundary in $K_i$.
Therefore, we only need to check 
whether 
$\fsimp_{i}$
is in a column in $\cycmat$.

\paragraph{$\linmap_i$ is surjective.}
\label{sec:bak-birth-detail}


Since columns in $\cycmat$, $\bndmat$, and $\chnmat$ form a basis for $\Chn(K_i)$
and 
$\fsimp_{i}$
is not in a column in $\cycmat$ or $\bndmat$,
we have that $\fsimp_{i}$ must be in at least one column of $\chnmat$.
Since $\fsimp_i\not\in K_{i+1}$,
we need to remove $\fsimp_{i}$ from $\chnmat$
when proceeding from $K_i$ to $K_{i+1}$.
To do this, 
we use a loop to repeatedly sum two columns 
in $\chnmat$ containing $\fsimp_i$
until only one column in $\chnmat$
contains $\fsimp_i$.
Notice that
whenever we sum two columns 
in $\chnmat$, we also need to sum the corresponding columns
in $\bndmat$
and their wire bundles.
Hence, the summations have to respect the order `$\lessB$'.
We use the following loop to perform the summations:

\newlength\ParaAlgIndent
\setlength{\ParaAlgIndent}{1.2em}
        \begin{enumerate}[parsep=0pt,itemsep=0pt]
            \item $\aG_1,\ldots,\aG_\ell\leftarrow$ indices of all columns of $\chnmat$ containing $\fsimp_i$
            \item sort and rename $\aG_1,\ldots,\aG_\ell$ s.t. 
            $\bbirth_{\aG_1}\lessB\cdots\lessB\bbirth_{\aG_\ell}$.
            \item $c_1\leftarrow\matcol{{\chnmat}}{\aG_1}$
            \item $c_2\leftarrow\matcol{{\bndmat}}{\aG_1}$
            \item $U\leftarrow\bndbndl^{\aG_1}$
            \item {\bf for} $\aG\leftarrow\aG_2,\ldots,\aG_\ell$ {\bf do}:
            \item \hspace{\ParaAlgIndent}{\bf if}
            $\pivot(\matcol{{\bndmat}}{\aG})>\pivot(c_2)$ 
            {\bf then}:
            \item \hspace{\ParaAlgIndent}\hspace{\ParaAlgIndent}
            $\matcol{\chnmat}{\aG}\leftarrow\matcol{\chnmat}{\aG}+c_1$ 
            \item \hspace{\ParaAlgIndent}\hspace{\ParaAlgIndent}
            $\matcol{\bndmat}{\aG}\leftarrow\matcol{\bndmat}{\aG}+c_2$ 
            \item \hspace{\ParaAlgIndent}\hspace{\ParaAlgIndent}
            $\bndbndl^{\aG}\leftarrow\bndbndl^{\aG}\bndlsum U$ 
            \item \hspace{\ParaAlgIndent}{\bf else}:
            \item \hspace{\ParaAlgIndent}\hspace{\ParaAlgIndent}
            $\texttt{temp\_c1}\leftarrow\matcol{\chnmat}{\aG}$
            \item \hspace{\ParaAlgIndent}\hspace{\ParaAlgIndent}
            $\matcol{\chnmat}{\aG}\leftarrow\matcol{\chnmat}{\aG}+c_1$ 
            \item \hspace{\ParaAlgIndent}\hspace{\ParaAlgIndent}
            $c_1\leftarrow\texttt{temp\_c1}$
            \smallskip
            \item \hspace{\ParaAlgIndent}\hspace{\ParaAlgIndent}
            $\texttt{temp\_c2}\leftarrow\matcol{\bndmat}{\aG}$
            \item \hspace{\ParaAlgIndent}\hspace{\ParaAlgIndent}
            $\matcol{\bndmat}{\aG}\leftarrow\matcol{\bndmat}{\aG}+c_2$ 
            \item \hspace{\ParaAlgIndent}\hspace{\ParaAlgIndent}
            $c_2\leftarrow\texttt{temp\_c2}$
            \smallskip
            \item \hspace{\ParaAlgIndent}\hspace{\ParaAlgIndent}
            $\texttt{temp\_U}\leftarrow\bndbndl^{\aG}$
            \item \hspace{\ParaAlgIndent}\hspace{\ParaAlgIndent}
            $\bndbndl^{\aG}\leftarrow\bndbndl^{\aG}\bndlsum U$ 
            \item \hspace{\ParaAlgIndent}\hspace{\ParaAlgIndent}
            $U\leftarrow\texttt{temp\_U}$
        \end{enumerate}

We always maintain the following invariants
for the loop:
(i) $c_2=\partial(c_1)$;
(ii) $c_2$ is the last cycle (at index $i$) in the
representative generated by $U$;
(ii) 
the birth index corresponding to $c_2$
(and $U$)
is always less than 
$\bbirth_\aG$ 
in the total order `$\lessB$';
(iv)
$c_2$ along with $\bndmat[{\aG_2}],\ldots,\bndmat[{\aG_\ell}]$
have distinct pivots.
 When the loop terminates,
we are left with a single column $\chnmat[\lG]:=\chnmat[\aG_1]$ in $\chnmat$
containing $\fsimp_i$.
Notice that 
$\bndmat[\lG]=\partial(\chnmat[\lG])=\partial(\chnmat[\lG]\setminus\Set{\fsimp_i})+\partial\fsimp_i$,
where $\chnmat[\lG]\setminus\Set{\fsimp_i}\subseteq K_{i+1}$.
This indicates that 
$\bndmat[\lG]$ is homologous to $\partial\fsimp_i$
in $K_{i+1}$.
So
we let the new wire $\wire{i+1}$ be $\bndmat[\lG]$
and  need to add $\wire{i+1}$ 
as a new column to $\cycmat$.
Notice that  we also delete $\bndmat[\lG]$ and $\chnmat[\lG]$
from $\bndmat$ and $\chnmat$ respectively.
Since the pivot of the newly added column in $\cycmat$
may clash  with that of another column in $\bndmat$ or $\cycmat$,
we need to perform 
summations as in Section~\ref{sec:fwd-death-detail}
to make the pivots distinct again.
Notice that assumptions on the matrices $\cycmat$, $\bndmat$, and $\chnmat$
still
hold. For example, 
columns in $\bndmat$ still form a basis for $\Bnd(K_{i+1})$
because columns in $\bndmat$ are still linearly independent
and the dimension of $\Bnd(K_{i+1})$ is one less than that of $\Bnd(K_{i})$.

\paragraph{$\linmap_i$ is injective.}
\label{sec:bak-death-detail}

We first update $\chnmat$ so that no columns of $\chnmat$ contain $\fsimp_i$.
            Let $\matcol{\cycmat}{k}$ be a column of $\cycmat$ containing $\fsimp_i$.
            For each column
             $\matcol{\chnmat}{j}$ of $\chnmat$ containing $\fsimp_i$,
            set $\matcol{\chnmat}{j}=\matcol{\chnmat}{j}+\matcol{\cycmat}{k}$.
            Notice that $\partial(\matcol{\chnmat}{j})$ stays the same
            but the updated $\matcol{\chnmat}{j}$ does not contain $\fsimp_i$. 

As indicated in Section~\ref{sec:alg-oview},
whenever there are 
two columns in $\cycmat$
which contain $\fsimp_i$,
we sum the 
two columns and their corresponding bundles
to remove $\fsimp_i$ from one column.
We implement the summations as follows,
which is similar to the loop in Section~\ref{sec:bak-birth-detail}:
        \begin{enumerate}[parsep=0pt,itemsep=0pt]
            \item $\aG_1,\ldots,\aG_\ell\leftarrow$ indices of all columns of $\cycmat$ containing $\fsimp_i$
            \item sort and rename $\aG_1,\ldots,\aG_\ell$ s.t. 
            $\hbirth_{\aG_1}\lessB\cdots\lessB\hbirth_{\aG_\ell}$.
            \item $z\leftarrow\matcol{{\cycmat}}{\aG_1}$
            \item $W\leftarrow\bndl^{\aG_1}$
            \item {\bf for} $\aG\leftarrow\aG_2,\ldots,\aG_\ell$ {\bf do}:
            \item \hspace{\ParaAlgIndent}{\bf if}
            $\pivot(\matcol{{\cycmat}}{\aG})>\pivot(z)$ 
            {\bf then}:
            \item \hspace{\ParaAlgIndent}\hspace{\ParaAlgIndent}
            $\matcol{\cycmat}{\aG}\leftarrow\matcol{\cycmat}{\aG}+z$ 
            \item \hspace{\ParaAlgIndent}\hspace{\ParaAlgIndent}
            $\bndl^{\aG}\leftarrow\bndl^{\aG}\bndlsum W$ 
            \item \hspace{\ParaAlgIndent}{\bf else}:
            \item \hspace{\ParaAlgIndent}\hspace{\ParaAlgIndent}
            $\texttt{temp\_z}\leftarrow\matcol{\cycmat}{\aG}$
            \item \hspace{\ParaAlgIndent}\hspace{\ParaAlgIndent}
            $\matcol{\cycmat}{\aG}\leftarrow\matcol{\cycmat}{\aG}+z$ 
            \item \hspace{\ParaAlgIndent}\hspace{\ParaAlgIndent}
            $z\leftarrow\texttt{temp\_z}$
            \smallskip
            \item \hspace{\ParaAlgIndent}\hspace{\ParaAlgIndent}
            $\texttt{temp\_W}\leftarrow\bndl^{\aG}$
            \item \hspace{\ParaAlgIndent}\hspace{\ParaAlgIndent}
            $\bndl^{\aG}\leftarrow\bndl^{\aG}\bndlsum W$ 
            \item \hspace{\ParaAlgIndent}\hspace{\ParaAlgIndent}
            $W\leftarrow\texttt{temp\_W}$
            \item delete the column $\matcol{{\cycmat}}{\aG_1}$ 
            from $\cycmat$
        \end{enumerate}

In the above pseudocodes, 
$\aG_1$ is the index `$\lG$' as in the corresponding case in Section~\ref{sec:alg-oview}.

\subsubsection{Time complexity}
From the algorithm details provided,
it could be verified that each iteration takes no more than
$O(mn)$ time. For example, the reductions to get the formula in \Cref{eqn:fwd-red}
take no more than $O(n^2)$ time because 
columns in $\cycmat$ and $\bndmat$ have distinct pivots.

\newcommand{\dec}[1]{\tilde{{#1}}}
\subsection{Adaption to non-simplex-wise filtrations}

So far, we have assumed that the input to our algorithm is
a simplex-wise zigzag filtration; we briefly describe here
how to obtain representatives using the algorithm when the input
is not simplex-wise. Assume that we have the following non-simplex-wise
zigzag filtration
\begin{equation*}
\dec{\filt}: \dec{K}_0 \leftrightarrow \dec{K}_1 \leftrightarrow 
\cdots \leftrightarrow \dec{K}_\filtlen.
\end{equation*}
We can build a simplex-wise zigzag filtration $\filt$ as follows
(notice that below
we consider
$\filt$ as a sequence of insertions or deletions of simplices):
\begin{enumerate}
    \item 
    Insert all the simplices in $\dec{K}_0$ such that 
    a simplex is always added after its proper faces.
    (This can be done by, e.g., inserting the simplices based on increasing
    order of dimension.)
    \item For $i=0,1,\ldots,\filtlen-1$:
    \begin{itemize}
        \item If $\dec{K}_{i}\incto\dec{K}_{i+1}$ is forward: 
        insert all simplices in $\dec{K}_{i+1}\setminus\dec{K}_{i}$
        such that
        a simplex is always added after its proper faces.
        \item If $\dec{K}_{i}\bakincto\dec{K}_{i+1}$ is backward: 
        delete all simplices in $\dec{K}_{i}\setminus\dec{K}_{i+1}$
        such that
        a simplex is always deleted before its proper faces.
        (This can be done by, e.g., deleting the simplices based on decreasing
        order of dimension.)
    \end{itemize}
\end{enumerate}

Let $\filt$ be of the following form
\begin{equation*}
\filt: \emptyset=K_0 \leftrightarrow K_1 \leftrightarrow
\cdots \leftrightarrow K_\filtcnt.
\end{equation*}
Notice that each complex $\dec{K}_{\iG}$ in $\dec{\filt}$ is equal to
some complex $K_i$ in $\filt$.
Let $\indmap:\{0,1,\ldots,\filtlen\}\to\{0,1,\ldots,\filtcnt\}$ be
an index map  
such that $\dec{K}_{\iG}=K_{\indmap(\iG)}$
for each $\iG$.
Observe that 
intervals in $\PersH(\filt)$ induce intervals in
$\PersH(\dec{\filt})$ in the following way:
\begin{itemize}
    \item For a $[b,d]\in\PersH(\filt)$
    and its sequence of complexes in $\filt$:
    \begin{equation}\label{eqn:seq-sw}
    K_{b},K_{b+1},\ldots,K_{d},
    \end{equation}
    let 
    \begin{equation}\label{eqn:seq-nonsw}
        \dec{K}_{\dec{b}}, \dec{K}_{\dec{b}+1}, \ldots, \dec{K}_{\dec{d}}
    \end{equation}
    be all the complexes in $\dec{\filt}$ appearing in 
    Sequence~(\ref{eqn:seq-sw}).
    If Sequence~(\ref{eqn:seq-nonsw}) is empty, then $[b,d]\in\PersH(\filt)$
    does not induce any interval in $\PersH(\dec{\filt})$;
    otherwise it induces an interval $[\dec{b},\dec{d}]\in\PersH(\dec{\filt})$.
\end{itemize}

Let $\{z_i\mid i\in[b,d]\}$ be the representative computed
for $[b,d]\in\PersH({\filt})$. 
Then, $\{z_{\indmap(\iG)}\mid \iG\in[\dec{b},\dec{d}]\}$
is a representative for $[\dec{b},\dec{d}]\in\PersH(\dec{\filt})$.



\section{Conclusion}
\revise{In this paper, we present an $O(m^2n)$ algorithm for computing representatives 
for a zigzag filtration $\filt$, where $m$ is the number of simplex insertions and deletions
in $\filt$ and $n$ is the maximum size of complexes in $\filt$.
The proposed algorithm improves the existing $O(m^2n^2)$ algorithm 
for computing zigzag representatives.
This improvement is achieved by maintaining the representatives 
in a more compact form using the new concept of wires and bundles.
An open question following our study is how to efficiently update the zigzag representatives
over local changes 
on the zigzag filtrations,
besides the existing works~\cite{dey2023revisiting,DBLP:conf/compgeom/DeyH24} 
on fast update of zigzag barcodes.
}

\opt{Arxiv}{\section*{Acknowledgment}
T.\ K.\ Dey was supported by NSF funds CCF 2049010 and DMS 2301360.
T.\ Hou was supported by NSF fund CCF 2439255.
D.\ Morozov
was supported by the U.S.\ Department of Energy, Office of Science, Office of Advanced Scientific Computing Research, Scientific Discovery through Advanced Computing (SciDAC) program, under Contract Number DE-AC02-05CH11231 at Lawrence Berkeley National Laboratory.}


\opt{Arxiv}{
\printbibliography
\appendix
}

\opt{SN}{

\section*{Declarations}
\begin{itemize}
    \item 
\textbf{Funding:} T.\ K.\ Dey was supported by National Science Foundation funds CCF 2049010, 2437030, and DMS 2301360.
T.\ Hou was supported by National Science Foundation fund CCF 2439255.
D.\ Morozov
was supported by the U.S.\ Department of Energy, Office of Science, Office of Advanced Scientific Computing Research, Scientific Discovery through Advanced Computing (SciDAC) program, under Contract Number DE-AC02-05CH11231 at Lawrence Berkeley National Laboratory.

\item
\textbf{Competing Interests:} The authors have no competing interests to declare that are relevant to the
content of this article.

\item
\textbf{Authors' contribution statements:} All authors (T.\ K.\ Dey, T.\ Hou, D.\ Morozov) contributed to the design of the approaches and the drafting of the manuscript. All authors reviewed and approved the current manuscript.
\end{itemize}

\bibliography{refs-SN}
}

\end{document}